\title{Weight function in a bimaterial strip containing an interfacial crack and an imperfect interface. Application to Bloch-Floquet analysis in a thin inhomogeneous structure with cracks.}
\author{A Vellender\thanks{Institute of Mathematics and Physics, Aberystwyth University ({\tt asv09@aber.ac.uk}).} \and G S Mishuris\thanks{Institute of Mathematics and Physics, Aberystwyth University ({\tt ggm@aber.ac.uk}).} \and A B Movchan\thanks{Department of Mathematical Sciences, University of Liverpool ({\tt abm@liv.ac.uk}).}}
\begin{document}
\maketitle

\begin{abstract}
We define a weight function and analyse a problem of anti-plane shear in a bi-material strip containing a semi-infinite crack and an imperfect interface. We then present an asymptotic algorithm which uses the weight function to evaluate the coefficients in asymptotics of solutions to problems of wave propagation in a thin bi-material strip containing a periodic array of cracks situated at the interface between two materials.
\end{abstract}
\begin{keywords}
Bloch-Floquet waves, boundary layer, crack, imperfect interface, weight function
\end{keywords}
\begin{AMS}
35P20, 35Q74, 45E10, 74K10, 74K30
\end{AMS}

\pagestyle{myheadings}
\thispagestyle{plain}
\markboth{Vellender, Mishuris, Movchan}{Weight function in a bimaterial strip}

\section{Introduction}
In this paper we address the problem of determining a weight function in a domain representing a bi-material strip containing a semi-infinite interfacial crack. Where the crack is not present the interface is considered {\em imperfect}, modelling a thin layer of adhesive between the materials.

Weight functions are mainly used to evaluate stress intensity factors for asymptotic representations near non-regular boundaries such as crack tips. Classically, symmetric weight functions for interfacial cracks in two-dimensional elasticity were studied by Hutchinson et. al \cite{Hutchinson} and Bueckner \cite{Bueckner}. In these classical works, weight functions were defined as the stress intensity factors corresponding to the point force loads applied to the faces of the crack. More recently, Willis and Movchan \cite{Willis} defined general weight functions as non-trivial singular solutions of a boundary value problem with zero tractions on the faces of the crack and unbounded elastic energy. Recently weight functions have been used to perform perturbation analysis of the crack front in \cite{PiccWeight} and to evaluate Lazarus-Leblond constants in \cite{LazLeb}. These works contain perfect interfaces which lead to the well-known square root singularity phenomenon \cite{Rice,Willis}. In the imperfect interface problem considered in the present paper there is no square root singularity in stress components and so the weight function instead takes the role of aiding in the evaluation of important asymptotic constants which take the place of stress intensity factors.

The imperfect interface is a crucial feature of the problem discussed.
\textcolor{black}{Accurate asymptotic derivations with various interfaces in composite materials (of imperfect type among others) for anti-plane shear without the presence of cracks have been analysed in \cite{Benveniste,Hashin,Lipton}. 
Such interfaces have been used to model a thin layer consisting of small cracks in such a way that the cracks do not appear in the analysis in \cite{Baik,Bostrom,GolubZhangWang} using the phenomenological approach.
Cracks in the static regime with imperfect interfaces have been studied in \cite{Antipov2001,Mish2001a}, where it is proved that the imperfect interface leads to a different type of singularity near the crack tip than in the ideal interface case.
Analysis of the perfect interface with cracks under harmonic load can be found in \cite{AchenbachLi,Mish2007} and recently for a layered composite with cracks in \cite{GolubZhangWang}. The manuscript \cite{Mish2007} considers wave propagation in a thin bi-material strip and discusses the singular behaviour near the crack tip, while \cite{AchenbachLi} considers a bi-material plane with waves propagating perpendicular to the cracks.}

We consider in this paper Mode III deformation and describe the extent of the interface's imperfection by a positive parameter denoted $\kappa$. The problem we study here is a singular perturbation problem; taking very small values for $\kappa$ gives a qualitatively significantly different weight function from that derived for the perfect interface case in \cite{Mish2007} which corresponds to the formulation with $\kappa=0$. Moreover, large values of $\kappa$ can lead to interesting effects where the boundary layers surrounding different crack tips decay slowly so they can no longer be considered as having no influence on the Bloch-Floquet conditions. This effect is discussed in \cite{Orlando2003}; in the analysis presented in the present paper we assume that $\kappa$ is not large enough for these effects to come into play and later find a condition for this to be the case. Problems regarding cracks in domains including imperfect interfaces have been studied in \cite{Antipov2001} and \cite{Mish2006}, but no corresponding weight function has previously been constructed.

\textcolor{black}{Another critical characteristic of the problem is that the strip considered is very thin. In addition to the strip itself being very thin, imperfect interfaces are typically replaced with an extremely thin layer of a softer bonding material in finite element computations (justified for example in \cite{Benveniste,Hashin,Mish2006}). Moreover, singular behaviour exists at the crack tips. These points make FEM modelling for particularly thin strips extremely difficult or even impossible and motivate the need for the asymptotic approach. In this paper we compare the asymptotic model with finite element simulations only in cases when the strip is not too thin, but stress that the finite element methods are unsuitable for the limiting case whereas the asymptotics remain valid. The asymptotic method also obtains crucial constants which describe the solution's behaviour at the crack tips which are vital for determining whether fracture may occur. These important constants would not be attained by finite element methods.}

The plan of the work is as follows. We first formulate the weight function problem and use Fourier transform and Wiener-Hopf techniques \cite{Noble} to obtain the solution. Asymptotic analysis enables us to find analytic expressions for all important constants. We then present an application of the weight function to the analysis of Bloch-Floquet waves in a structure containing a periodic array of cracks and imperfect interfaces. This application involves the derivation of junction conditions. 
\textcolor{black}{Asymptotic theories for structures like rods and plates have received much attention throughout the history of elasticity theory. For multi-structures however, conditions in engineering practice are often formulated on the basis of intuitive physical assumptions \cite{Asplund}. For example, the zero order junction conditions for the problem addressed fit with physical intuition. It is important to give these conditions a rigorous mathematical footing; moreover, higher order junction conditions do not follow such intuition \cite{Kozlov}. 
} 

We conclude by presenting a comparison between the perfect interface case studied in \cite{Mish2007} and the imperfect interface case presented here.

\section{Weight Function}\label{weightsection}
\subsection{Formulation of the Problem}
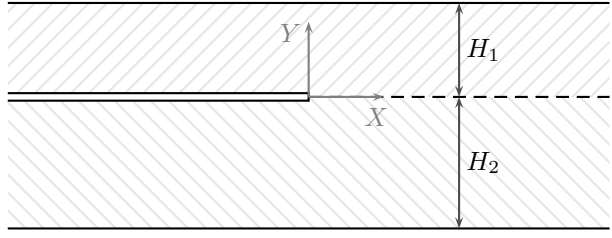
\begin{figure}[t]
\begin{center}
\begin{pspicture}[showgrid=false](0,0)(8,3)
\newgray{asv}{0.9}
\newgray{darker}{0.3}
\pspolygon[fillstyle=vlines,hatchcolor=asv,linestyle=none](0,0)(0,1.7)(4,1.7)(4,1.75)(8,1.75)(8,0)
\pspolygon[fillstyle=hlines,hatchcolor=asv,linestyle=none](0,3)(0,1.8)(4,1.8)(4,1.75)(8,1.75)(8,3)
\psline[linestyle=dashed](4,1.75)(8,1.75)
 \qline(0,0)(8,0)
 \qline(0,3)(8,3)
\psline(0,1.7)(4,1.7)(4,1.8)(0,1.8)
\psline[linecolor=darker]{<->}(6,1.75)(6,3)
\psline[linecolor=darker]{<->}(6,0)(6,1.75)
\rput[lm](6.1,0.875){$H_2$}
\rput[lm](6.1,2.375){$H_1$}
\psline[linecolor=gray]{->}(4,1.75)(4,2.75)
\psline[linecolor=gray]{->}(4,1.75)(5,1.75)
\rput[tm](4.9,1.6){\textcolor{gray}{$X$}}
\rput[rm](3.9,2.6){\textcolor{gray}{$Y$}}
\end{pspicture}
\end{center}
\caption{Geometry for the weight function.}
\label{weightsetup}
\end{figure}
The geometry of the strip in which we construct the weight function is shown in Figure \ref{weightsetup}. We define our domain $\Pi_B$ to be the union of $\Pi_B^{(1)}$ and $\Pi_B^{(2)}$, where
\[
\Pi_B^{(j)}=\{(X,Y):X\in\mathbb{R}, \quad (-1)^{j+1}Y\in(0,H_j)\},\quad j=1,2.
\]
$\Pi_B^{(1)}$ corresponds to the material above the cut with shear modulus $\mu_1$, while $\Pi_B^{(2)}$ corresponds to the material below the cut with shear modulus $\mu_2$. The materials have respective thicknesses $H_1$ and $H_2$. A semi-infinite crack with its tip placed at the origin occupies $X<0$, while the rest of the interface is assumed to be imperfect (see (\ref{transcons1}) in the text below).

The functions $w_1$ and $w_2$ are defined in domains $\Pi_1$ and $\Pi_2$ respectively as solutions to the Laplace equation
\begin{equation}\label{laplace}
\nabla^2 w_j(X,Y)=0.
\end{equation}
We impose boundary conditions along the horizontal parts of the boundary of $\Pi_B$ and on the crack face itself. We denote the components of stress in the out-of-plane direction by
\begin{equation}\label{stressnotation}
 \sigma_{nz}^{(j)}(X,Y):=\mu_j\frac{\partial u^{(j)}}{\partial n},\quad j=1,2.
\end{equation}
We assume a zero stress component in the out-of-plane direction along the top and bottom of the strip, as well as along the face of the crack itself:
\begin{eqnarray}
\sigma_{YZ}^{(1)}(X,H_1)=0,\quad\sigma_{YZ}^{(2)}(X,-H_2)=0, \quad X\in\mathbb{R},\label{boundcons1}
\\\sigma_{YZ}^{(1)}(X,0^+)=0,\quad \sigma_{YZ}^{(2)}(X,0^-)=0, \quad X<0.\label{boundcons3}
\end{eqnarray}
Ahead of the cut we impose the imperfect transmission conditions
\begin{equation}\label{transcons1}
\left.w_1\right|_{Y=0_+}-\left.w_2\right|_{Y=0_-}=\kappa\sigma_{YZ}^{(1)}(X,0^+),\quad X>0,
\end{equation}
where $\kappa>0$ is a parameter describing the extent of imperfection of the interface. We further assume continuity of tractions across the interface between the materials
\begin{equation}\label{transcons2}
\sigma_{YZ}^{(1)}(X,0^+)=\sigma_{YZ}^{(2)}(X,0^-),\quad X>0.
\end{equation}

We seek solutions in the class of functions that decay exponentially as $X\to+\infty$ and are bounded as $X\to-\infty$:
\begin{equation}\label{watinf}
w_j=O(e^{-\gamma_+ X}),\quad X\to+\infty;\qquad w_j=C_j+O(e^{\gamma_- X}),\quad X\to-\infty,
\end{equation}
where $\gamma_\pm>0$ and $C_j$ are constants to be sought from the analysis. At the vertex of the crack, the solution $w_j$ is assumed to be weakly singular, with
\begin{equation}\label{wzero}
w_1, w_2=O(\ln|X|), \quad X\to0.
\end{equation}
Formally, conditions (\ref{laplace})-(\ref{watinf}) are similar to those in \cite{Mish2007} if we take $\kappa=0$. However, with $\kappa>0$ the problem is a singular perturbation problem and the behaviour described in (\ref{wzero}) is entirely different.

\subsection{An auxiliary problem}
We now introduce an auxiliary solution $\mathcal{Y}$ which satisfies the Laplace equation (\ref{laplace}) along with the boundary and transmission conditions (\ref{boundcons1})-(\ref{transcons2}), but the conditions at infinity and at the vertex of the crack are modified as follows:
\begin{eqnarray}
	\label{plusinf}\mathcal{Y}_j&=&O(e^{-\gamma_+X}), \quad X\to+\infty,
	\\\label{minusinf}\mathcal{Y}_j &=& C_jX+D_j+O(e^{\gamma_- X}), \quad X\to-\infty,
	\\\mathcal{Y}_j&=&\mathcal{Y}_j(0_+,0)+O(X\ln|X|),\quad X\to0.\label{2.20}
\end{eqnarray}

The functions $w$ and $\mathcal{Y}$ are related via the formula
\begin{equation}
w(X,Y)=\frac{\partial}{\partial X}\mathcal{Y}(X,Y).
\end{equation}
Bearing this relationship in mind, we often later refer to $\mathcal{Y}$ as a `weight function' as well as $w$.
It is also shown in \cite{Mish2001a} that as $R\to0$,
\begin{equation}\label{Mish2001aform}
 \mathcal{Y}_\pm=\frac{(-1)^ja_0^{(\mathcal{Y})}}{\pi\mu_j}\left\{\frac{\mu_1\kappa\pi}{1+\frac{\mu_1}{\mu_2}} + \left[1-\ln\left(\frac{R}{b_0^{(\mathcal{Y})}}\right)\right]R\cos\theta\pm(\pi\mp\theta)R\sin\theta\right\},
\end{equation}
where $\mathcal{Y}_+$ and $\mathcal{Y}_-$ represent $\mathcal{Y}_1(R,\theta)$ and $\mathcal{Y}_2(R,\theta)$ respectively and $(R,\theta)$ describes the usual polar co-ordinate system with $\theta\in[0,\pi]$ for $\mathcal{Y}_1$ and $\theta\in[-\pi,0]$ for $\mathcal{Y}_2$.

\subsubsection{Derivation of Wiener-Hopf equation}
We define the Fourier transforms of $\mathcal{Y}_j$ by
\begin{equation}
{\bar{\mathcal{Y}}_j}(\xi, Y)=\int\limits_{-\infty}^\infty{e^{i\xi X}\mathcal{Y}_j}(X,Y)dX.
\end{equation}
The functions ${\bar{\mathcal{Y}}_j}$ are analytic in the strip $S=\{\xi\in\mathbb{C}:-\gamma_+<\mathrm{Im}(\xi)<0\},$ and have a double pole only at the point $\xi=0$, so
\begin{equation}\label{pole}
{\bar{\mathcal{Y}}_j}(\xi,Y)\sim\frac{1}{\xi^2}C_j-i\frac{D_j}{\xi}+O(1), \quad\xi\to0.
\end{equation}
Note that the functions ${\bar{\mathcal{Y}}_j}(\xi,Y)$ can be analytically extended to the strip
\[
\tilde{S}=\{\xi\in\mathbb{C}:-\gamma_+<\mathrm{Im}(\xi)<\gamma_-\}.
\]
Let us now introduce $[\mathcal{Y}]$, the jump in $\mathcal{Y}$, defined by
\begin{equation}
[\mathcal{Y}]=\left.{\mathcal{Y}_1}\right|_{Y=0+}-\left.{\mathcal{Y}_2}\right|_{Y=0-}.
\end{equation}
We see from (\ref{pole}) that the Fourier transform of the jump $[\mathcal{Y}](X)$ generally speaking has a double pole at the point $\xi=0$.

We introduce the following notation:
\begin{equation}
\Phi^-(\xi)=\overline{[\mathcal{Y}]-\left.\mu_1\kappa\frac{\partial\mathcal{Y}_1}{\partial Y}\right|_{Y=0+}}=\int\limits_{-\infty}^0{\left([\mathcal{Y}](X)-\left.\mu_1\kappa\frac{\partial\mathcal{Y}_1}{\partial Y}\right|_{Y=0+}\right)e^{i\xi X}dX},
\end{equation}
where we have taken into account (\ref{transcons1}) or equivalently the fact that $[\mathcal{Y}]-\left.\mu_1\kappa\frac{\partial\mathcal{Y}_1}{\partial Y}\right|_{Y=0+}=0$ for $X>0$.
The function $\Phi^-(\xi)$ is analytic in the half plane $\mathrm{Im}(\xi)<0$ and has a double pole at $\xi=0$. Thus it can be analytically extended into the half-plane
$\mathbb{C}^-=\{\xi\in\mathbb{C}:\mathrm{ Im}(\xi)<\gamma_-\}.$
We further define the function
\begin{equation}\label{phiplusdef}
\Phi^+(\xi)=\mu_1\int\limits_0^\infty{\left.\frac{\partial \mathcal{Y}_1}{\partial Y}\right|_{Y=0+}e^{i\xi X}dX},
\end{equation}
and so according to (\ref{boundcons3}), $\Phi^+(\xi)$ is analytic in the half plane
$\mathbb{C}^+=\{\xi\in\mathbb{C}:\mathrm{ Im}(\xi)>-\gamma_+\}$.

We expect that
\begin{equation}\label{phiexpect}
 \Phi^\pm(\xi)=\frac{E_1^\pm}{\xi}+\frac{E_2^\pm\ln(\mp i\xi)}{\xi}+O\left(\frac{1}{\xi^2}\right),\quad\xi\to\infty,
\end{equation}
in the respective domain according to (\ref{2.20}); we later confirm this to be true. 

\textcolor{black}{The Fourier transforms of the functions $\mathcal{Y}_j$ are of the form
\begin{equation}\label{fourieryj}
\bar{\mathcal{Y}}_j(\xi,Y)=A_j(\xi)\cosh(\xi Y)+B_j(\xi)\sinh(\xi Y).
\end{equation}
Upon the application of boundary and transmission conditions expressions relating $A_j(\xi)$ and $B_j(\xi)$ are found:
\begin{equation}\label{ajbj}
B_j(\xi)=(-1)^j A_j(\xi)\tanh(\xi H_j),\quad j=1,2; \qquad \mu_1B_1(\xi)-\mu_2B_2(\xi)=0.
\end{equation}
Moreover, $\Phi^\pm(\xi)$ can be expressed in terms of $A_j(\xi), B_j(\xi)$.
\begin{equation}\label{phipm}
\Phi^-(\xi)=A_1(\xi)-A_2(\xi)-\mu_1\kappa\xi B_1(\xi),\quad \Phi^+(\xi)=\mu_1\xi B_1(\xi).
\end{equation}
By applying boundary and transmission conditions, we conclude that the functions $\Phi^+(\xi)$ and $\Phi^-(\xi)$ satisfy the functional equation of the Wiener-Hopf type
\begin{equation}\label{wh}
\Phi^-(\xi)=-\Xi(\xi)\Phi^+(\xi),
\end{equation}
}in the strip $-\gamma_+<\mathrm{Im}(\xi)<0$, where
\begin{equation}\label{bigxi}
\Xi(\xi)=\frac{1}{\xi}\left(\frac{1}{\mu_1}\coth(\xi H_1)+\frac{1}{\mu_2}\coth(\xi H_2)+\kappa\xi\right),
\end{equation}
and $-\gamma_+$ is equal to the size of the imaginary part of the first zero of $\Xi(\xi)$ lying below the real axis. We would like to stress that the form of the Wiener-Hopf kernel $\Xi(\xi)$ demonstrates that the weight function problem is a singular perturbation problem as $\kappa\to0$; the presence of the term involving $\kappa$ fundamentally alters the asymptotic behaviour of $\Xi(\xi)$ as $\xi\to\infty$.

\subsubsection{Factorization of the Wiener-Hopf kernel}
We note that the kernel function $\Xi(\xi)$ as defined in (\ref{bigxi}) can be written in the form
\begin{equation}\label{factorisedeqn}
\Xi(\xi)=\kappa\frac{(\lambda+i\xi)(\lambda-i\xi)}{\xi^2}\Xi_*(\xi),
\end{equation}
where
\begin{equation}
\Xi_*(\xi)=\frac{\xi(\mu_1\coth(\xi H_2)+\mu_2\coth(\xi H_1)+\mu_1\mu_2\kappa \xi)}{\mu_1\mu_2\kappa(\lambda^2+\xi^2)},
\end{equation}
and
\begin{equation}\label{lambdadef}
 \lambda=\sqrt{\frac{\mu_1H_1+\mu_2H_2}{\mu_1\mu_2H_1H_2\kappa}}.
\end{equation}

Now, $\Xi_*(\xi)$ is analytic in a strip containing the real axis, clearly positive, even and smooth for all $\xi\in\mathbb{R}$ and has been chosen in such a way so that $\Xi_*(\xi)$ tends towards 1 as $\xi\to\pm\infty$ and as $\xi\to0$. Furthermore, the function $\Xi_*(\xi)$ can be factorized in the form
\begin{equation}\label{xistarfactorised}
\Xi_*(\xi)=\Xi_*^+(\xi)\Xi_*^-(\xi),
\end{equation}
where
\begin{equation}\label{xistarpm}
\Xi_*^\pm(\xi)=\exp\left\{{\frac{\pm1}{2\pi i}\int\limits^{\infty\mp i\beta}_{-\infty\mp i\beta}{\frac{\ln\Xi_*(t)}{t-\xi}dt}}\right\},
\end{equation}
and $\beta>0$ is chosen to be sufficiently small so the contours of integration lie within the strip of analyticity of $\Xi_*(\xi)$. The functions $\Xi_*^\pm$ are analytic in their respective half-planes.
To conclude this subsection, we have factorised $\Xi(\xi)$ in the form given in (\ref{factorisedeqn}) and (\ref{xistarfactorised}),
where $\Xi_*^\pm$ are analytic in the half-planes denoted by their superscripts. Note that in the case $H_1=H_2$, other factorisation has been obtained in \cite{Antipov2001}.

\subsubsection{Asymptotic behaviour of $\Xi_*^+$}
We now seek asymptotic estimates of $\Xi_*^+(\xi)$. We first note that for $\xi$ within the strip of analyticity,
\begin{equation}\label{estofxi0}
\Xi(\xi)=\frac{\eta}{\xi^2}+O(1), \quad \Xi_*(\xi)=1+O(|\xi|^2), \quad\xi\to0,\quad \eta=\frac{1}{\mu_1 H_1}+\frac{1}{\mu_2 H_2}.
\end{equation}
Let us now consider more accurately the behaviour of $\Xi_*(\xi)$ for $\xi\in\mathbb{R}$ as $\xi\to\infty$. Noting that $\Xi_*(\xi)$ is an even function, it follows from (\ref{bigxi}) that

\begin{equation}\label{bigxiasym}
\Xi_*(\xi)=1+\frac{\mu_1+\mu_2}{\mu_1\mu_2\kappa|\xi|}-\frac{\lambda^2}{\xi^2}+O\left(\frac{1}{|\xi|^3}\right),\quad\xi\to\pm\infty.
\end{equation}
The same estimate is true for any $\xi$ lying in the strip of analyticity. We further find that
\begin{equation}\label{xistarplus0}
\Xi_*^+(\xi)
=1+\frac{\alpha\xi}{\pi i}+O(|\xi|^2),\quad\xi\to0,
\end{equation}
\begin{equation}\label{xistarplusinf}
\Xi_*^+(\xi)=1+\frac{1}{\pi i}\frac{(\mu_1+\mu_2)}{\mu_1\mu_2\kappa}\frac{\ln(-i\xi)}{\xi}+O\left(\frac{1}{|\xi|}\right),\quad\mathrm{Im}(\xi)\to+\infty;
\end{equation}
the derivation of these expressions is given in Appendix A. Here we have defined the asymptotic constant
\begin{equation}\label{alphadef}
 \alpha=\int\limits_{0}^\infty{\frac{\ln\Xi_*(t)}{t^2}dt}.
\end{equation}
The important expression (\ref{xistarplusinf}) describing logarithmic asymptotics at infinity is needed later for equation (\ref{phiplusverified}).

\subsubsection{Solution of the Wiener-Hopf equation}

The factorized equation (\ref{wh}) is of the form
\begin{equation}\label{whsoln}
-\kappa(\lambda-i\xi)\Phi^+(\xi)\Xi_*^+(\xi)=\frac{1}{\lambda+i\xi}\xi^2\Phi^-(\xi)\frac{1}{\Xi_*^-(\xi)}.
\end{equation}
Both sides of (\ref{whsoln}) represent analytic functions in the strip $-\gamma_+<\mathrm{Im}(\xi)<\gamma_-$. Moreover we now have asymptotic estimates for $\Xi_*^\pm(\xi)$ at the zero point in equation (\ref{xistarplus0}) and for $\xi\to\pm\infty$ in (\ref{xistarplusinf}). We deduce that since both sides of (\ref{whsoln}) exhibit the same behaviour at infinity in their respective domains according to (\ref{phiexpect}), both sides must be equal to a constant, which we denote $\mathcal{A}$. We can therefore obtain explicit expressions for $\Phi^\pm$, which are as follows:
\begin{equation}\label{phis}
	\Phi^+(\xi)=-\frac{\mathcal{A}}{\kappa(\lambda-i\xi)\Xi_*^+(\xi)},\qquad
	\Phi^-(\xi)=\frac{\mathcal{A}(\lambda+i\xi)\Xi_*^-(\xi)}{\xi^2},
\end{equation}
We deduce that
\begin{equation}\label{ysoln}
	\bar{\mathcal{Y}}_j(\xi,Y)=-\frac{\mathcal{A}\Phi^+(\xi)}{\mu_j\xi}\left\{\frac{\cosh(\xi(Y+(-1)^j H_j))}{\sinh(\xi(-1)^{j+1}H_j)}\right\},\quad j=1,2.
\end{equation}
This allows us to investigate the behaviour of $\bar{\mathcal{Y}}_j$ as $\xi\to\pm\infty$ and at the zero point. It also enables us to find the hitherto unknown real constants $C_j$ and $D_j$.

\subsubsection{Evaluation of constants $C_j$, $D_j$, $a_0^{(\mathcal{Y})}$, $\gamma_\pm$}
In this subsection we evaluate the constants $\gamma_+$ (defined in (\ref{plusinf})), $\gamma_-$, $C_j$, $D_j$ (defined in (\ref{minusinf})) and $a_0^{(\mathcal{Y})}$ (defined in (\ref{Mish2001aform})). We see from our expressions for $\bar{\mathcal{Y}}_j$ and $\Phi^+$ (equations (\ref{phis}) and (\ref{ysoln})), along with our asymptotic estimate for $\Xi_*^+(\xi)$ as $\xi\to0$ that
\begin{equation}
	\bar{\mathcal{Y}}_j(\xi)=\frac{(-1)^{j+1}\mathcal{A}}{\kappa\lambda\mu_j H_j}\left(\frac{1}{\xi^2}-\frac{i}{\xi}\left(-\frac{\alpha}{\pi}-\frac{1}{\lambda}\right)\right)+O(1),\quad \xi\to0,
\end{equation}
where $\alpha$ is the constant defined in (\ref{alphadef}). It follows from our definition of $C_j$ and $D_j$ in (\ref{pole}) that
\begin{equation}
C_j=\frac{(-1)^{j+1}\mathcal{A}}{\kappa\lambda\mu_j H_j},\quad\quad D_j=\frac{(-1)^j\mathcal{A}}{\kappa\lambda\mu_j H_j}\left(\frac{\alpha}{\pi}+\frac{1}{\lambda}\right).
\end{equation}
For normalisation we choose $\mathcal{A}=\kappa\lambda$, giving
\begin{equation}
C_j=\frac{(-1)^{j+1}}{\mu_j H_j},\quad\quad D_j=\frac{(-1)^j}{\mu_j H_j}\left(\frac{\alpha}{\pi}+\frac{1}{\lambda}\right).\label{cjdj}
\end{equation}
The chosen normalisation leaves (\ref{cjdj})$_1$ in the same form as in \cite{Mish2007}, but it is clearly seen that the expression for $D_j$ (which depends upon $\kappa$ is different). Mishuris (2001) \cite{Mish2001a} demonstrates that near the crack tip (i.e. as $R\to0$), $\mathcal{Y}_j(R,\theta)$ has behaviour described by (\ref{Mish2001aform}). From this we see that
\begin{equation}
 [\mathcal{Y}]\sim-\kappa a_0^{(\mathcal{Y})},\quad R\to0.
\end{equation}
The imperfect transmission conditions (\ref{transcons1}) therefore give that
\begin{equation}
 \mu_1\left.\frac{\partial\mathcal{Y}_1}{\partial Y}\right|_{Y=0+}\sim-a_0^{(\mathcal{Y})},\quad X\to0.
\end{equation}

We earlier made an assumption in (\ref{phiexpect}) regarding the behaviour of $\Phi^+$ at infinity and now verify that this was correct. It follows from the expression for $\Phi^+(\xi)$ given in (\ref{phis}) and the asymptotic estimate for $\Xi_*^+(\xi)$ given in (\ref{xistarplusinf}) that
\begin{equation}\label{phiplusverified}
\Phi^+(\xi)=\frac{\lambda}{i\xi}+\frac{(\mu_1+\mu_2)\lambda}{\mu_1\mu_2\pi\kappa\xi^2}\ln(-i\xi)+O\left(\frac{1}{|\xi|^2}\right),\quad\mathrm{Im}(\xi)\to+\infty,
\end{equation}
which justifies our previous claim. Theorem \ref{theoremforphi} (using $\mu_1\frac{\partial \mathcal{Y}}{\partial Y}$ in place of `$f$' in the statement of the theorem) then yields that
\begin{equation}
 \lim\limits_{X\to0+}\mu_1\frac{\partial \mathcal{Y}}{\partial Y}=-\lambda,
\end{equation}
where $\lambda$ has been defined in (\ref{lambdadef}) and so it follows that
\begin{equation}\label{a0value}
 a_0^{(\mathcal{Y})}=\lambda.
\end{equation}
The constant $\gamma_+$ is the distance of the first zero of $\Xi(\xi)$ below the real axis. Manipulation of (\ref{bigxi}) indicates that zeros of $\Xi(\xi)$ satisfy
\begin{equation}
 \frac{1}{\mu_1}\cot(\gamma_+ H_1)+\frac{1}{\mu_2}\cot(\gamma_+ H_2)-\kappa\gamma_+=0,
\end{equation}
For the first zero below the axis, for large $\kappa$, $\gamma_+$ should be small, and so it can be shown that
\begin{equation}\label{gammaplusest}
 \gamma_+(\kappa)=\lambda(\kappa)(1+O(\kappa^{-1})),\quad\kappa\to\infty,
\end{equation}
indicating that $\gamma_+(\kappa)=O(\kappa^{-1/2})$, $\kappa\to\infty$. We also see that
\begin{equation}
 \gamma_+(0)\in\left(\frac{\pi}{2H_1},\frac{\pi}{2H_2}\right).
\end{equation}

The constant $\gamma_-$ is given by
\begin{equation}
 \gamma_-=\pi\min\left\{\frac{1}{H_1},\frac{1}{H_2}\right\}.
\end{equation}
In conjunction with (\ref{cjdj}) we have now found all constants describing the asymptotic behaviour of the weight function $\mathcal{Y}$.

\section{Application to Analysis of Bloch-Floquet Waves}\label{application}
In this section, we present an application of the weight function derived in the previous section by addressing the problem of out-of plane shear Bloch-Floquet waves within a thin bi-material strip containing a periodic array of longitudinal cracks and imperfect interfaces. The problem addressed is an imperfect interface analogue to that studied in \cite{Mish2007}.
\subsection{Geometry}
The geometry of an elementary cell of the thin periodic structure considered is shown in Figure \ref{cellgeom}. The elementary cell is of length $a$ and contains two materials of thicknesses $\varepsilon H_1$ and $\varepsilon H_2$, where $\varepsilon$ is a small dimensionless parameter. These materials occupy respective domains $\Pi^{(j)},$ $j=1,2,$ and the elementary cell is further split into smaller domains $\Omega_\varepsilon^{(m)},$ $m=1,2,3,4,$ as shown in Figure \ref{cellgeom}. Along the join of the two materials and centered on the origin sits a crack of length $l$. Outside the crack, the interface is assumed to be imperfect, which models a thin layer of adhesive joining the materials together. The extent of this imperfection is represented by the parameter $\kappa$.
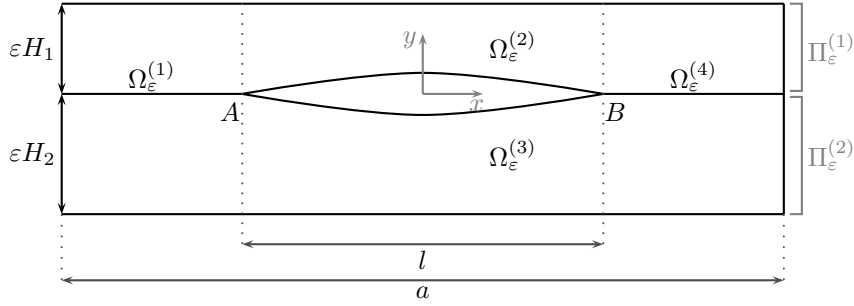
\begin{figure}[t]
\begin{center}
\psset{unit=0.8cm}
\begin{pspicture}[showgrid=false](-6,-3.3)(7,1.5)
\newgray{asv}{0.9}
\newgray{darker}{0.3}
\qline(-6,-2)(6,-2)
\qline(-6,1.5)(6,1.5)
\qline(-6,0)(-3,0)
\qline(3,0)(6,0)
\qline(6,1.5)(6,-2)
\psline{<->}(-6,-2)(-6,0)
\rput[rm](-6.1,-1){$\varepsilon H_2$}
\psline{<->}(-6,0)(-6,1.5)
\rput[rm](-6.1,0.75){$\varepsilon H_1$}
\pscurve(-3,0)(0,0.35)(3,0)
\pscurve(-3,0)(0,-0.35)(3,0)
\psline[linecolor=gray]{->}(0,0)(0,1)
\psline[linecolor=gray]{->}(0,0)(1,0)
\rput[tm](0.9,-0.1){\textcolor{gray}{$x$}}
\rput[rm](-0.1,0.9){\textcolor{gray}{$y$}}
\psline[linestyle=dotted,linecolor=darker](-3,-2.5)(-3,1.5)
\psline[linestyle=dotted,linecolor=darker](3,-2.5)(3,1.5)
\psline[linecolor=darker]{<->}(-3,-2.5)(3,-2.5)
\rput[tm](0,-2.6){$l$}
\psline[linestyle=dotted,linecolor=darker](-6,-3.1)(-6,-2)
\psline[linestyle=dotted,linecolor=darker](6,-3.1)(6,-2)
\psline[linecolor=darker]{<->}(-6,-3.1)(6,-3.1)
\rput[tm](0,-3.2){$a$}
\rput(-4.5,0.3){$\Omega^{(1)}_\varepsilon$}
\rput(1.5,0.8){$\Omega^{(2)}_\varepsilon$}
\rput(1.5,-1){$\Omega^{(3)}_\varepsilon$}
\rput(4.5,0.3){$\Omega^{(4)}_\varepsilon$}
\rput(-3.2,-0.3){$A$}
\rput(3.2,-0.3){$B$}
\psline[linecolor=gray](6.1,1.5)(6.3,1.5)(6.3,0.05)(6.1,0.05)
\psline[linecolor=gray](6.1,-0.05)(6.3,-0.05)(6.3,-2)(6.1,-2)
\rput[lm](6.4,0.75){\textcolor{gray}{$\Pi_\varepsilon^{(1)}$}}
\rput[lm](6.4,-1){\textcolor{gray}{$\Pi_\varepsilon^{(2)}$}}
\end{pspicture}
\end{center}
\caption{Geometry of the elementary cell.}
\label{cellgeom}
\end{figure}

The functions $u^{(j)}(x,y)$ are defined in $\Pi_\varepsilon^{(j)}$, $j=1,2$ as solutions of the Helmholtz equations
\begin{equation}\label{helmholtz}
 \nabla^2 u^{(j)}(x,y)+\frac{\omega^2}{c_j^2}u^{(j)}(x,y)=0,\qquad (x,y)\in\Pi_\varepsilon^{(j)},\quad j=1,2.
\end{equation}
 Here, $c_j=\sqrt{\mu_j/\rho_j}$ are the shear speeds in their respective domains $j=1,2$. The functions $u^{(j)}$ are regarded as out-of-plane displacements, $\mu_j$ denotes the shear modulus and $\rho_j$ the mass density of the material occupying $\Pi_\varepsilon^{(j)}$. The quantity $\omega$ represents the radian frequency of the time-harmonic vibrations with  amplitude $u$.

\subsection{Boundary conditions}
We impose boundary conditions along the horizontal parts of the boundary of $\Pi_\varepsilon$ and on the crack face itself. We use similar notation to that in the previous section to denote the components of stress (see (\ref{stressnotation})).

We assume a zero stress component in the out-of-plane direction along the top and bottom of the strip, as well as along the face of the crack itself:
\begin{eqnarray}
 {\sigma_{yz}^{(1)}(x,\varepsilon H_1)=0,\qquad}{\sigma_{yz}^{(2)}(x,-\varepsilon H_2)=0,}\quad{x\in(-a/2,a/2),}\label{topbc}
 \\{\sigma_{yz}^{(1)}(x,0^+)=0,\qquad}{\sigma_{yz}^{(2)}(x,0^-)=0,}\quad{x\in(-l/2,l/2)}.
\end{eqnarray}

Outside the crack, along the boundary between $\Pi_\varepsilon^{(1)}$ and $\Pi_\varepsilon^{(2)}$, there is an imperfect interface described by the condition
\begin{equation}\label{imperfectbc}
 u^{(1)}(x,0^+)-u^{(2)}(x,0^-)=\varepsilon\kappa\sigma_{yz}^{(1)}(x,0^+),\quad x\in(-a/2,-l/2)\cup(l/2,a/2).
\end{equation}
We also assume continuity of stress across the interface
\begin{equation}\label{middlebc}
  \sigma_{yz}^{(1)}(x,0^+)=\sigma_{yz}^{(2)}(x,0^-),\quad x\in(-a/2,-l/2)\cup(l/2,a/2).
\end{equation}
We seek the solutions $u^{(j)}$ which represent the Bloch-Floquet waves, so that at the ends of our elementary cell $x=\pm a/2$ we have for $j=1,2$ the Bloch-Floquet conditions
\begin{eqnarray}
 u^{(j)}(-a/2,y)&=&e^{-iKa}u^{(j)}(a/2,y),\qquad y\in(-\varepsilon H_2,\varepsilon H_1),
  \\\sigma_{xz}^{(j)}(-a/2,y)&=&e^{-iKa}\sigma_{xz}^{(j)}(a/2,y),\qquad y\in(-\varepsilon H_2,\varepsilon H_1).
\end{eqnarray}

For a fixed value of the Bloch parameter $K$, we seek the eigenvalues $\omega$ and the corresponding eigenfunctions $u^{(j)}$ with finite norm in $W_2^1(\Pi_\varepsilon^{(j)})$, $j=1,2$.

In (\ref{imperfectbc}), the case in which $\kappa=0$ corresponds to an ideal/perfect interface between the different materials; such a problem was considered in \cite{Mish2007}. Where possible we will follow the same line as in this paper. To summarise the approach, we approximate $u$ in a certain form, derive a lower-dimensional model together with boundary layers in the vicinity of the vertices of the crack and then use our weight function to assist in the derivation of junction conditions for a skeleton model.

\subsection{Asymptotic Ansatz}
The eigenfunctions $u(x,y)$ are approximated in the form
\begin{eqnarray}\label{uansatz}
u(x,y,\varepsilon)&=&\sum\limits_{k=0}^N{\varepsilon^k}
\left\{
  \sum\limits_{m=1}^4\chi_m
  \left(
    v_m^{(k)}(x)+\varepsilon^2 V_m^{(k)}(x,Y)
  \right)\right.\nonumber\\
&+&\left.\left(
W_A^{(k)}(X_A,Y)+W_B^{(k)}(X_B,Y)
\right)
\right\}
+R_N(x,y,\varepsilon),
\end{eqnarray}
with scaled co-ordinates $X_A,$ $X_B$ and $Y$ introduced in the vicinity of the left and right vertices of the crack defined as
\begin{equation}\label{scaledcoords}
 X_A=\frac{x-x_A}{\varepsilon},\qquad X_B=\frac{x-x_B}{\varepsilon},\qquad Y=\frac{y}{\varepsilon}.
\end{equation}
Here, $v_k^{(m)}$ represent solutions of lower-dimensional problems within limit sets $\Omega_0^{(j)}$, $j=1,2,3,4$. $\chi_m=\chi_m(x,y,\varepsilon)$ are cut-off functions defined so that $\chi_m(x,y;\varepsilon)\equiv1$ in $\Omega_\varepsilon^{(m)}$ and decay rapidly to zero outside $\Omega_\varepsilon^{(m)}$. They vanish near the so-called junction points $A$ and $B$ (the vertices of the crack). The terms $W_A^{(k)}$ and $W_B^{(k)}$ represent the boundary layers near $A$ and $B$, and  $V_m^{(k)}$ is the `fast' change of eigenfunctions in the transverse direction in the domain $\Omega_\varepsilon^{(j)}$. $R_N$ is the remainder term in the asymptotic approximation. We would like to indicate to the reader that the uppercase scaled co-ordinate $X_B$ defined in (\ref{scaledcoords}) corresponds to $X$ from the derivation of the weight function in section \ref{weightsection}.

We note that this form of Ansatz relies upon the vital assumption that the boundary layers surrounding the crack vertices $A$ and $B$ are independent. That is, we assume that the exponential decay of both boundary layers is sufficiently rapid so that it is negligible in the vicinity of the other crack tip.

In this paper we will consider the form of approximation given in (\ref{uansatz}) with $N=1$ and will comment on the effect of taking higher order approximations.

\subsection{One-dimensional model problems}
Outside the vicinity of $A$ and $B$, the boundary layers $W_A^{(j)}$ and $W_B^{(j)}$ decay (we later verify this to be the case) and so seek $u$ in the form
\begin{equation}\label{notboundarylayer}
 u(x,y,\varepsilon)\sim\sum\limits_{k=0}^1{\varepsilon^k\left(v_m^{(k)}(x)+\varepsilon^2 V_m^{(k)}(x,Y)\right)},
\end{equation}
where $V_m^{(k)}$ have zero average over the cross-section of $\Omega_\varepsilon^{(m)}$ for all $m=1,2,3,4$. That is,
\begin{equation}\label{zerocs}
 \int_0^{H_1}{V_m^{(k)}}(x,Y)dY=0,\qquad\int\limits_{-H_2}^0{V_m^{(k)}}(x,Y)dY=0.
\end{equation}
Since the low-dimensional model problem studied in \cite{Mish2007} was the same above and below the crack (in $\Omega_\varepsilon^{(2)}$ and $\Omega_\varepsilon^{(3)}$), we refer the reader to that paper. The problem is however differently formulated in $\Omega_\varepsilon^{(1)}$ and $\Omega_\varepsilon^{(4)}$ due to the imperfect transmission conditions in these domains.
We focus our attention on the layered structure $\Omega_\varepsilon^{(1)}$; analogous arguments will apply to $\Omega_\varepsilon^{(4)}$.
We use the notation $v_{1j}^{(k)}$ to denote the function $v_{1}^{(k)}$ in $\Pi_\varepsilon^{(j)}$. The key observation is then to note that the transmission condition across the imperfect interface as given in (\ref{imperfectbc}) imply that
\begin{equation}\label{kiszero}
 v_{11}^{(k)}-v_{12}^{(k)}=0,\quad k=0,1.
\end{equation}
and so it follows that for $k=0,1,$ that the solution to this low dimensional model is not impacted by the presence of the imperfect interface.

To conclude this section, we have found that our case with the imperfect interface has the same equations for the low dimensional model up to terms in $\varepsilon$ as the case with the perfect interface studied in \cite{Mish2007}. The equations for $v_4^{(k)}$ and $V_4^{(k)}$ are of course similar to the case examined here where $m=1$. We would like to stress that the imperfect interface impacts on the low dimensional model equations for terms in $\varepsilon^k$, $k\geq2$. The equations gained in this section need to be complemented with the boundary conditions and junction conditions at the points $x_A$ and $x_B$. In order to derive these junction conditions which depend on the imperfect parameter $\kappa$, we construct boundary layers in the vicinity of the vertices of the crack.

\section{Junction conditions}
We introduce four smooth cut-off functions $\chi_m\in C^\infty({\mathbb R})$ in the spirit of \cite{Mish2007}. These are functions defined so that $\chi_m(x,y;\varepsilon)\equiv1$ in $\Omega_\varepsilon^{(m)}$ and decay rapidly to zero outside $\Omega_\varepsilon^{(m)}$.
These allow us to extend the function (\ref{notboundarylayer}) outside $\Omega_\varepsilon^{(m)}$, $m=1,2,3,4$, giving
\begin{equation}
 u(x,y;\varepsilon)\sim\sum_{k=0}^1\varepsilon^k\sum_{m=1}^4{\chi_m}(x,y,\varepsilon)\left(v_m^{(k)}(x)+\varepsilon^2 V_m^{(k)}(x,Y)\right),
\end{equation}
however this gives an error near the junction points $x_A$ and $x_B$. We therefore introduce boundary layers $W_A(X_A,Y)$ and $W_B(X_B,Y)$, and so seek $u(x,y,\varepsilon)$ in the form
\[
 \hspace{2ex}u\sim\sum_{k=0}^1\varepsilon^k\left\{\sum_{m=1}^4{\chi_m}\left(v_m^{(k)}(x)+\varepsilon^2 V_m^{(k)}(x,Y)\right)+W_A^{(k)}(X_A,Y)+W_B^{(k)}(X_B,Y)\right\}.
\]
Substituting this expression into the original equation and comparing terms of the same degree of $\varepsilon$ we obtain
\begin{equation}
 \nabla^2_{X_\alpha Y}\left\{W_\alpha^{(k)}(X_\alpha,Y)+\mathcal{F}_\alpha^{(k)}(X_\alpha,Y)\right\}=0,\quad \alpha=A,B,\quad k=0,1,
\end{equation}
with the functions $\mathcal{F}_\alpha^{(k)}$, $k=0,1$, $\alpha=A,B$ given by
\[
 \mathcal{F}_A^{(0)}=\sum\limits_{m=1}^{3}{v_m^{(0)}(x_A)\chi_m(x,y;\varepsilon)},\quad \mathcal{F}_B^{(0)}=\sum\limits_{m=2}^{4}{v_m^{(0)}(x_B)\chi_m(x,y;\varepsilon)},
\]
\begin{eqnarray}
\mathcal{F}_A^{(1)}&=&\sum\limits_{m=1}^{3}{\left\{(v_m^{(0)})'(x_A)X_A+v_m^{(1)}(x_A)\right\}\chi_m(x,y;\varepsilon)},
\\\mathcal{F}_B^{(1)}&=&\sum\limits_{m=2}^{4}{\left\{(v_m^{(0)})'(x_B)X_B+v_m^{(1)}(x_B)\right\}\chi_m(x,y;\varepsilon)}.
\end{eqnarray}
We now focus our attention near $x_B$; analogous arguments apply to $x_A$. We will consider in the following analysis four functions $g_i$, $i=1,2,3,4,$ which are solutions of the Laplace equation. These solutions also satisfy the boundary conditions corresponding to zero stress on the top and bottom edges of the strip (\ref{boundcons1}) as well as along the cut itself (\ref{boundcons3}).
They also satisfy the transmission condition (\ref{transcons1}) across the imperfect interface, along with continuity of stress (\ref{transcons2}).
These solutions are given by
\begin{equation}
 g_1=1,\qquad g_2=X_B,\qquad g_3=\mathcal{Y},\qquad g_4=\frac{\partial\mathcal{Y}}{\partial X},
\end{equation}
where $\mathcal{Y}$ is the weight function derived in section \ref{weightsection}.

Since they are boundary layers, we expect that $W_B^{(k)}$ decay exponentially as $X\to+\infty$ and behave as $C_j^{(k)}X+D_j^{(k)}$ as $X\to-\infty$. We first express $C_j^{(k)}$, $D_j^{(k)}$, $k=0,1$ in terms of $v_m^{(k)}$ and their derivatives. We have from Green's formula that
\begin{equation}\label{greens}
 0=\sum\limits_{j=1}^2{\mu_j}\int\limits_{\partial \Pi_B^{(j)}(L)}\left(g_i\frac{\partial}{\partial n}(W_B^{(k)}+\mathcal{F}_B^{(k)})-(W_B^{(k)}+\mathcal{F}_B^{(k)})\frac{\partial g_i}{\partial n}\right)dS.
\end{equation}
The further analysis is quite similar to that in \cite{Mish2007}, although we would like to stress that the weight function $\mathcal{Y}$ in the present paper is different, as are the transmission conditions. We therefore need to prepare this analysis from the beginning where it is different for $g_3,$ $g_4$.
\begin{figure}[t]
\begin{center}
\begin{pspicture}[showgrid=false](-7,-4)(4,4)
\psset{arrowscale=2}
\psline{->}(3,0.2)(3,1.75)
\qline(3,1.6)(3,3)
\psline{->}(3,3)(-1.5,3)
\qline(-1.4,3)(-6,3)
\psline{->}(-6,3)(-6,1.75)
\qline(-6,1.9)(-6,0.2)
\psline{->}(-6,0.2)(-3,0.2)
\qline(-4,0.2)(-0.96,0.2)
\psarcn{->}(0,0){1}{168.463041}{90}
\psarcn(0,0){1}{96}{11.536959}
\psline{->}(0.96,0.2)(1.8,0.2)
\qline(1.7,0.2)(3,0.2)
\psline(3,-0.2)(3,-1.9)
\psline{->}(3,-3)(3,-1.75)
\psline(3,-3)(-1.7,-3)
\psline{->}(-6,-3)(-1.5,-3)
\psline(-6,-3)(-6,-1.6)
\psline{->}(-6,-0.2)(-6,-1.75)
\psline(-6,-0.2)(-3,-0.2)
\psline{->}(-0.96,-0.2)(-3.2,-0.2)
\psarcn(0,0){1}{-90}{-168.463041}
\psarcn{->}(0,0){1}{-11.536959}{-96}
\psline(0.96,-0.2)(2,-0.2)
\psline{->}(3,-0.2)(1.8,-0.2)
\pscircle*(0,0){0.05}
\rput[lm](0.3,0){$B$}
\rput(-1.5,3.4){$l_1^{(1)}$}
\rput(-1.5,-3.4){$l_1^{(2)}$}
\rput(-6.4,1.5){$l_2^{(1)}$}
\rput(-6.4,-1.5){$l_2^{(2)}$}
\rput(3.4,1.5){$l_5^{(1)}$}
\rput(3.4,-1.5){$l_5^{(2)}$}
\rput(-2.5,0.6){$l_3^{(1)}$}
\rput(-2.5,-0.6){$l_3^{(2)}$}
\rput(2,0.6){$l_4^{(1)}$}
\rput(2,-0.6){$l_4^{(2)}$}
\rput(0,1.4){$S_\delta^{(1)}$}
\rput(0,-1.4){$S_\delta^{(2)}$}
\rput(-4,1.5){$\Pi_B^{(1)}{(L)}$}
\rput(-4,-1.5){$\Pi_B^{(2)}{(L)}$}
\rput[lm](-6,-3.3){$X_B=-L$}
\rput[rm](3,-3.3){$X_B=L$}
\end{pspicture}
\end{center}
\caption{Contour of integration for (\ref{greens})}\label{contour}
\end{figure}
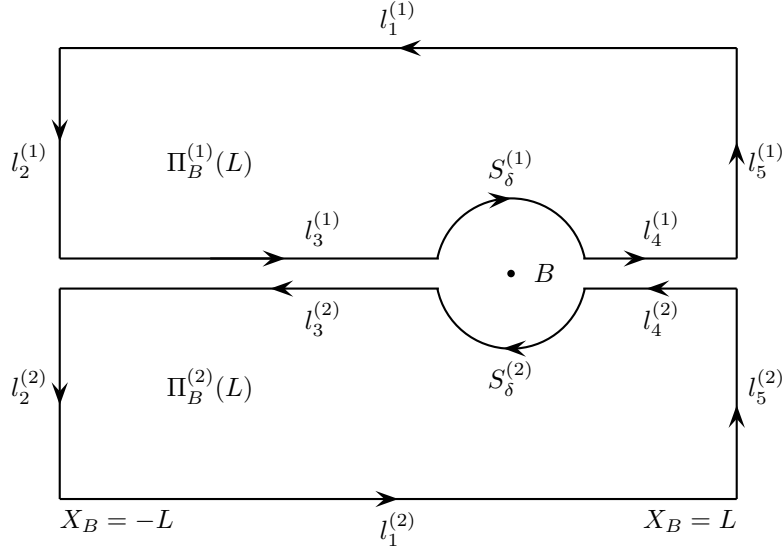
\subsection{The cases $k=0,1$, $i=1,2,3$}
We see from boundary conditions that integrals over the horizontal parts of the boundary $l_1^{(j)},l_3^{(j)},l_4^{(j)}$, $j=1,2$ give zero contribution to the integral. Moreover, the contribution from $S_\delta$ also disappears as $\delta\to0$ (see Figure \ref{contour}) for $g_1$, $g_2$ and $g_3$, leaving contributions solely from $l_2^{(j)}$ and $l_5^{(j)}$ in these cases.

From the definitions of $\mathcal{F}_B^{(k)}$, we obtain the following limits at $\pm\infty$ for $k=0,1$:
\begin{eqnarray}
 \mathcal{F}_B^{(0)}&=&v_4^{(0)}(x_B),\quad X_B\to+\infty,
\\\mathcal{F}_B^{(0)}&=&v_2^{(0)}(x_B)\mathcal{H}(Y)+v_3^{(0)}(x_B)\mathcal{H}(-Y),\quad X_B\to-\infty,
\\\mathcal{F}_B^{(1)}&=&(v_4^{(0)})'(x_B)X_B+v_4^{(1)}(x_B),\quad X_B\to+\infty,
\\\mathcal{F}_B^{(1)}&=&\sum\limits_{j=2}^3\left\{(v_j^{(0)})'(x_B)X_B+v_j^{(1)}(x_B)\right\}\mathcal{H}((-1)^{j}Y),\quad  X_B\to-\infty,
\end{eqnarray}
where ${\mathcal{H}}(Y)$ is the Heaviside step function.
Since $W_B^{(k)}\to0$ as $X_B\to+\infty$, equation (\ref{greens}) reduces to
\begin{eqnarray}\label{l2l5}
 0&=&\sum\limits_{j=1}^2\mu_j
\int\limits_{l_5^{(j)}}{\left(g_i\frac{\partial}{\partial X_B}\mathcal{F}_B^{(k)}-\mathcal{F}_B^{(k)}\frac{\partial g_i}{\partial X_B}\right)dS}\\&-&\sum\limits_{j=1}^2\mu_j\int\limits_{l_2^{(j)}}{\left(g_i\frac{\partial}{\partial X_B}\left(\mathcal{F}_B^{(k)}+W_B^{(k)}\right)-\left(\mathcal{F}_B^{(k)}+W_B^{(k)}\right)\frac{\partial g_i}{\partial X_B}\right)dS}
\end{eqnarray}
Applying this procedure with each of $g_1,g_2,g_3$ and $\mathcal{F}_B^{(0)},\mathcal{F}_B^{(1)}$ yields six equations, which are presented in subsection \ref{junctionsub}.

\subsection{The cases $k=0,1$, $i=4$}
To obtain a further two equations, we apply the same procedure to the solution $g_4 = \frac{\partial\mathcal{Y}}{\partial X_B}$.
Again, the contribution from the horizontal parts of the contour of integration is zero, leaving nonzero contributions from the vertical parts of the contour, $l_2^{(j)}$ and $l_5^{(j)}$. Unlike with $g_1$, $g_2$ and $g_3$ however, the contribution from ${S_\delta}^{(j)}$ is non-zero. We investigate the behaviour of $g_4$ near the crack tip.

We have that $g_4^{(j)}=\frac{\partial \mathcal{Y}_j}{\partial X}=\frac{\partial \mathcal{Y}_j}{\partial R}\cos\theta-\frac{1}{R}\frac{\partial \mathcal{Y}_j}{\partial\theta}\sin\theta,$ where $(R,\theta)$ is the usual polar co-ordinate system, with $R=\sqrt{X_B^2+Y^2}$ and so from our asymptotic estimate for $\mathcal{Y}_j$ near the crack tip from we deduce that near the crack tip,
\begin{equation}\label{cracktipfory}
 g_4^{(j)}\sim\frac{(-1)^j}{\pi\mu_j}\left\{b_0^{(\mathcal{Y})}+a_0^{(\mathcal{Y})}\ln R+(-1)^{(j+1)}a_0^{(\mathcal{Y})}\sin2\theta(\pi+(-1)^j\theta)\right\},
\end{equation}
and so for small $R$,
\begin{equation}
 \frac{\partial g_4^{(j)}}{\partial R}\sim\frac{(-1)^ja_0^{(\mathcal{Y})}}{\pi\mu_jR}.
\end{equation}
Noting that the outward normal to $S_\delta^{(j)}$ is in the direction of $-R$, we have that as $\delta\to0$
\begin{eqnarray}
\mu_j\int\limits_{S_\delta}\left(g_4\left(-\frac{\partial}{\partial R}\right)(W_B^{(k)}+\mathcal{F}_B^{(k)})-(W_B^{(k)}+\mathcal{F}_B^{(k)})\left(-\frac{\partial g_4}{\partial R}\right)\right)dS
\\=\mu_j\int\limits_{S_\delta}\left((W_B^{(k)}+\mathcal{F}_B^{(k)})\frac{\partial g_4}{\partial R}\right)Rd\theta
=\mu_j\int\limits_{S_\delta}\frac{(-1)^ja_0^{(\mathcal{Y})}}{\pi\mu_jR}\left(W_B^{(k)}+\mathcal{F}_B^{(k)}\right)Rd\theta.\nonumber
\end{eqnarray}
Since $W_B$ satisfies the same model problem as $\mathcal{Y}$, it too will possess asymptotic behaviour at the crack tip of the same form as $g_4$ in (\ref{cracktipfory}), but with different constants which we denote $a_{(k)}^{(W)}$ and $b_{(k)}^{(W)}$ for $k=0,1$. The contribution to the integral from the circular part of the contour is therefore given by
\begin{eqnarray}
-\frac{a_0^{(\mathcal{Y})}}{\pi}\int\limits_0^\pi\left(W_B^{(k)}(0^+,\theta)+\mathcal{F}_B^{(k)}(0^+,\theta)\right)d\theta
+\frac{a_0^{(\mathcal{Y})}}{\pi}\int\limits_{-\pi}^0\left(W_B^{(k)}(0^+,\theta)+\mathcal{F}_B^{(k)}(0^+,\theta)\right)d\theta \nonumber
\\=-\frac{a_0^{(\mathcal{Y})}}{\pi}\int\limits_0^\pi\frac{-1}{\pi\mu_1}\frac{\mu_1\kappa\pi}{1+\frac{\mu_1}{\mu_2}}a_{(k)}^{(W)}d\theta+
\frac{a_0^{(\mathcal{Y})}}{\pi}\int\limits_{-\pi}^0\frac{1}{\pi\mu_2}\frac{\mu_1\kappa\pi}{1+\frac{\mu_1}{\mu_2}}a_{(k)}^{(W)}d\theta
=\kappa a_0^{(\mathcal{Y})}a_{(k)}^{(W)}.\nonumber
\end{eqnarray}
With this information at hand, we are now able to apply (\ref{greens}) with $g_4$ and $\mathcal{F}_B^{(1)}, \mathcal{F}_B^{(2)}$, yielding a further two relationships.

\subsection{Deriving the junction conditions}\label{junctionsub}
We define the column matrices
\begin{equation}E^{(k)}=\left[
 \begin{array}{cccc}
  C_1^{{(k)}}&C_2^{{(k)}}&D_1^{{(k)}}&D_2^{{(k)}}
 \end{array}\right]^T,\qquad k=0,1.
\end{equation}
The eight equations obtained in the previous two subsections can then be rewritten as two matrix equations, the first of which is found to be
\begin{equation}\label{zeromatrix}
  ME^{(0)}
=\left[
\begin{array}{c}
   0\\{(\mu_1H_1+\mu_2H_2)v_4^{(0)}(x_B)-\mu_1H_1v_2^{(0)}(x_B)-\mu_2H_2v_3^{(0)}(x_B)}\\{\mu_1H_1C_1v_2^{(0)}(x_B)+\mu_2H_2C_2v_3^{(0)}(x_B)}\\{\kappa a_0^{(\mathcal{Y})}a_0^{(W)}},
\end{array}
\right]
\end{equation}
where $M$ is the 4x4 matrix
\begin{equation}\label{mmatrix}
\left[
\begin{array}{cccc}
  \mu_1H_1&\mu_2H_2&0&0\\
  0&0&\mu_1H_1&\mu_2H_2\\
  \mu_1H_1D_1&\mu_2H_2D_2&-\mu_1H_1C_1&-\mu_2H_2C_2\\
  \\\mu_1H_1C_1&\mu_2H_2C_2&0&0
 \end{array},
\right]
\end{equation}
where $C_j$ and $D_j$ are the asymptotic constants from the weight function defined in (\ref{cjdj}). The determinant of $M$ is given by $\det(M)=-\mu_1^2\mu_2^2H_1^2H_2^2(C_1-C_2)^2<0$. Therefore for $C_1^{(0)}=C_2^{(0)}=D_1^{(0)}=D_2^{(0)}=0$ (that is, for $W$ to vanish far away from the crack tip as we would expect for such a boundary layer), we have that the matrix in the right hand side of (\ref{zeromatrix}) must be equal to zero. From this follow the junction conditions
\begin{eqnarray}
 v_2^{(0)}(x_B)&=&v_3^{(0)}(x_B)=v_4^{(0)}(x_B),\label{zerojc1}
\\ a_{(0)}^{(W)}&=&0.\label{zerojc2}
\end{eqnarray}
The latter condition (\ref{zerojc2}) yields that $W_B^{(0)}\equiv0$.
The second matrix equation is
\begin{equation}\label{onematrix}
ME^{(1)}=
\end{equation}
\[
\left[\begin{array}{c}
   {(\mu_1H_1+\mu_2H_2)(v_4^{(0)})'(x_B)-\mu_1H_1(v_2^{(0)})'(x_B)-\mu_2H_2(v_3^{(0)})'(x_B)}\\{(\mu_1H_1+\mu_2H_2)v_4^{(1)}(x_B)-\mu_1H_1v_2^{(1)}(x_B)-\mu_2H_2v_3^{(1)}(x_B)}\\
  {\mu_1H_1C_1v_2^{(1)}(x_B)+\mu_2H_2C_2v_3^{(1)}(x_B)-\mu_1H_1D_1(v_2^{(0)})'(x_B)-\mu_2H_2D_2(v_3^{(0)})'(x_B)}\\
  {\kappa a_0^{(\mathcal{Y})}a_1^{(W)}-\mu_1H_1C_1(v_2^{(0)})'(x_B)-\mu_2H_2C_2(v_3^{(0)})'(x_B)}
  \end{array}\right]
\]
where $M$ is the matrix given in (\ref{mmatrix}). For $C_1^{(1)}=C_2^{(1)}=D_1^{(1)}=D_2^{(1)}=0$, the right hand matrix is again set to zero. Noting that $a_0^{(\mathcal{Y})}=\lambda$ (see (\ref{a0value})) and that $\mu_1H_1C_1+\mu_2H_2C_2=0$, setting the fourth row of the RHS matrix to zero then yields that
\begin{equation}
a_{(1)}^{(W)}=\frac{1}{\kappa\lambda}\Delta\{(v^{(0)})'\}.\label{firstjc3}
\end{equation}
where
\begin{equation}
\Delta\{(v^{(0)})'\}(x_B)= (v_2^{(0)})'(x_B)-(v_3^{(0)})'(x_B).
\end{equation}
The other conditions imply
\begin{eqnarray}
 v_2^{(1)}(x_B)&=v_4^{(1)}(x_B)-\frac{\mu_2H_2}{\mu_1H_1+\mu_2H_2}\left(\frac{\alpha}{\pi}+\frac{1}{\lambda}\right)\Delta\{(v^{(0)})'\}(x_B),\label{firstjc1}
\\v_3^{(1)}(x_B)&=v_4^{(1)}(x_B)+\frac{\mu_1H_1}{\mu_1H_1+\mu_2H_2}\left(\frac{\alpha}{\pi}+\frac{1}{\lambda}\right)\Delta\{(v^{(0)})'\}(x_B),\label{firstjc2}
\end{eqnarray}
along with the relationship
\begin{equation}\label{firstjc4}
 (\mu_1H_1+\mu_2H_2)(v_4^{(0)})'(x_B)-\mu_1H_1(v_2^{(0)})'(x_B)-\mu_2H_2(v_3^{(0)})'(x_B)=0.
\end{equation}
\textcolor{black}{We stress that $\alpha$ and $\lambda$ are functions of $\kappa$ and so expressions (\ref{firstjc1}) and (\ref{firstjc2}) describe how the junction conditions depend upon the extent of imperfection of the interface. In particular, $(\alpha/\pi+1/\lambda)$ is a constant that plays a crucial physical role since it defines the proportionality between the displacement jump in the first order approximation and the angle of opening in the zero order approximation. Equation (\ref{firstjc4}) complements conditions (\ref{zerojc1}) and (\ref{zerojc2}) to give full information for the zero order approximation. We later present numerical results for the normalized constant $\alpha_I=(\alpha/\pi)+1/\lambda)/(H_1+H_2)$.}

The conditions regarding the first order approximation (\ref{firstjc3}), (\ref{firstjc1}) and (\ref{firstjc2}) can be complemented by a further equation in $(v_m^{(1)})'(x_B)$, which follows from the next level of approximation, i.e. taking $N=2$ in (\ref{uansatz}):
\[
 \mu_1H_1(v_2^{(1)})'(x_B)+\mu_2H_2(v_3^{(1)})'(x_B)-(\mu_1H_1+\mu_2H_2)(v_4^{(1)})'(x_B)=\sum\limits_{j=1}^2\int\limits_{\Pi_B^{(j)}}\frac{\omega^2}{c_j^2}W_B^{(0)}\mathrm{d}\Pi_B^{(j)},
\]
and by our earlier comment that $W_B^{(0)}\equiv0$, the right side of this expression is zero.
\textcolor{black}{At this point we would like to comment that taking higher order approximations and evaluating higher order junction conditions is possible but much more advanced. For example, integrals analogous to that on the right hand side of the above expression would depend upon $W_B^{(1)}$ and boundary layers from higher order approximations, and so would not in general be zero. However, since we focus on thin strips, $\varepsilon$ is small and so terms in $\varepsilon^2$ would give significantly less contribution than the lower order approximations. We later comment on the accuracy of the zero order approximation on page \pageref{fem_comparison_comments} by comparing computations against FEM results in a case where $\varepsilon$ is not too small. We would like to underline that the accuracy will increase for smaller $\varepsilon$, but for very small $\varepsilon$ it is no longer possible to obtain finite element computations.}

\section{Numerical simulations and discussions}\label{section:comp}
\begin{figure}[t]
  \includegraphics[width=0.9\linewidth]{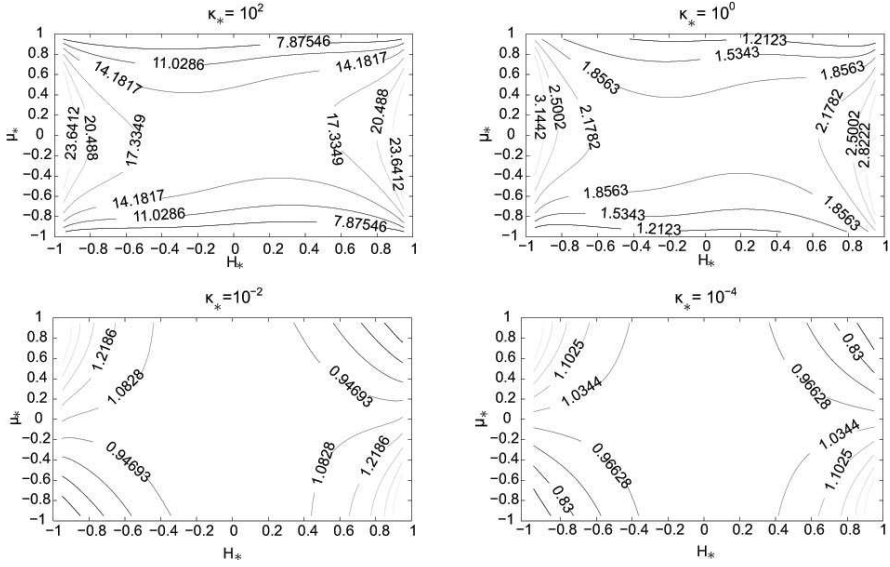}
  \caption{{\footnotesize{Contour plots of the ratio $\alpha_I/\alpha_P$ for four different values of $\kappa_*$, a dimensionless parameter describing the extent of imperfection of the interface between the two materials. The axes of each plot are $\mu_*$ and $H_*$, dimensionless parameters respectively describing the mechanical and geometric properties of the problem. The ratio $\alpha_I/\alpha_P$ gets closer to $1$ as $\kappa_*$ decreases in value towards $0$.}}}
  \label{contourplot}
\end{figure}

To enable us to compare results with the perfect interface case discussed in \cite{Mish2007} effectively, we seek normalized constants. We first seek a normalized representation of $\alpha$. We introduce the notation
\[
 H=H_1+H_2,\; H_*=\frac{H_1-H_2}{H_1+H_2},\; \mu_*=\frac{\mu_1-\mu_2}{\mu_1+\mu_2},\; \kappa_*=\frac{\kappa(\mu_1+\mu_2)}{H},\;\lambda_*=\lambda H,
\]
where $H_*$, $\mu_*$ and $\kappa_*$ are non-dimensional parameters which respectively describe the geometrical, mechanical and imperfect properties of the problem. $\lambda$ is the constant dependent on $\mu_j$, $H_j$ and $\kappa$ defined in (\ref{lambdadef}). $\lambda_*$ can be expressed in terms of the other dimensionless parameters as
\begin{equation}
 \lambda_*^2=\frac{8(1+\mu_*H_*)}{\kappa_*(1-\mu_*^2)(1-H_*^2)}.
\end{equation}
 We also introduce the function
\[
 \Xi_{**}(t)=\frac{t}{\lambda_*^2+t^2}\left(t+\frac{2}{\kappa_*(1+\mu_*)}\coth\frac{t(1+H_*)}{2}+\frac{2}{\kappa_*(1-\mu_*)}\coth\frac{t(1-H_*)}{2}\right),
\]
which satisfies the relationship $\Xi_{**}(t)=\Xi_*\left(\frac{t}{H}\right)$, and so we can write
\begin{equation}
 \alpha=\int\limits_0^\infty\frac{\ln\Xi_*(\xi)}{\xi^2}d\xi=\int\limits_0^\infty{\frac{H^2\ln\Xi_{**}(t)}{t^2}\frac{dt}{H}}=H\int\limits_0^\infty\frac{\ln\Xi_{**}(t)}{t^2}dt=H\alpha_*,
\end{equation}
where we have defined the non-dimensional quantity $\alpha_*$. We find through asymptotic analysis that
\begin{equation}
 \frac{\ln\Xi_{**}(t)}{t^2}=\frac{1}{12}\frac{H_*^3\mu_*-H_*^2-\mu_*H_*+1}{1+\mu_* H_*}+O(t^2),\quad t\to0.
\end{equation}

Mishuris, Movchan and Bercial \cite{Mish2007} showed that in the analogous problem to that discussed in this paper with a perfect interface instead of an imperfect interface,
\begin{equation}
 D_j=\alpha_{{P}}(H_1+H_2)C_j,
\end{equation}
where
\[
 \alpha_P=\frac{1}{\pi}\ln\left\{\left(\frac{1+H_*}{2}\right)^{\frac{1+H_*}{2}}\left(\frac{1-H_*}{2}\right)^{\frac{1-H_*}{2}}\right\}-\frac{\mu_*}{\pi}\int\limits_0^\infty\frac{H_*-\tanh(tH_*)\coth(t)}{(\sinh(t)+\mu_*\sinh(tH_*))t}dt.
\]
We have demonstrated (see the form of the constants $C_j$, $D_j$ in (\ref{cjdj})) that for the imperfect interface problem,
\begin{equation}
 D_j=\alpha_{{I}}(H_1+H_2)C_j,
\end{equation}
where
\begin{equation}
 \alpha_I=-\left(\frac{1}{\pi}\int\limits_0^\infty\frac{\ln\Xi_{**}(t)}{t^2}dt+\frac{1}{\lambda_*}\right),
\end{equation}
and since small $\kappa_*$ correspond to an interface which is `almost perfect', we would expect $\alpha_I\to\alpha_P$ as $\kappa_*\to0$.
Figure \ref{contourplot} shows a plot of the ratio $\alpha_I/\alpha_P$ on axes of $\mu_*$ against $H_*$ for four different values of $\kappa_*$. From this it is easily seen that as $\kappa_*\to0$, $\alpha_I/\alpha_P$ gets close to $1$ as expected. The behaviour of the weight functions near the crack tip are however absolutely different since the problem is singularly perturbed, that is:
\begin{equation}
 [\mathcal{Y}]\sim\sqrt{\eta\kappa},\quad\kappa\to0,
\end{equation}
\begin{equation}
 \mu_1\left.\frac{\partial\mathcal{Y}}{\partial Y}\right|_{Y=0+}\sim-\sqrt{\frac{\eta}{\kappa}},\quad X\to0,\quad\kappa\to0,
\end{equation}
where $\eta$ is defined in (\ref{estofxi0}).
In the plots showing the ratio for small values of $\kappa_*$, the highest deviations from 1 occur near the corner of the plot. These correspond to the cases where there is a high contrast between the shear moduli and thicknesses of the two materials. We see that in the cases where the materials have similar shear moduli and thicknesses (nearer the center of the plot), the ratio $\alpha_I/\alpha_P$ quickly approaches 1 as $\kappa_*\to0$.

Figure \ref{alphaplots} shows surface plots of $\alpha_I$ on axes of $\mu_*$ and $H_*$ for $\kappa_*=100,$ 1, and 0.01. This constant describes the impact that the imperfect interface has upon the junction conditions as described in equations (\ref{firstjc1}) and (\ref{firstjc2}). Also shown in the figure is a plot of $\alpha_P$. The similarity between the plot of $\alpha_I$ for $\kappa_*=0.01$ and the plot of $\alpha_P$ is evident here. For the cases with larger $\kappa_*$ values, we see that $\alpha_P$ is differently dependent upon the mechanical and geometric parameters of the problem.

\begin{figure}[t]
\begin{minipage}{0.9\linewidth}{
     \includegraphics[width=6cm]{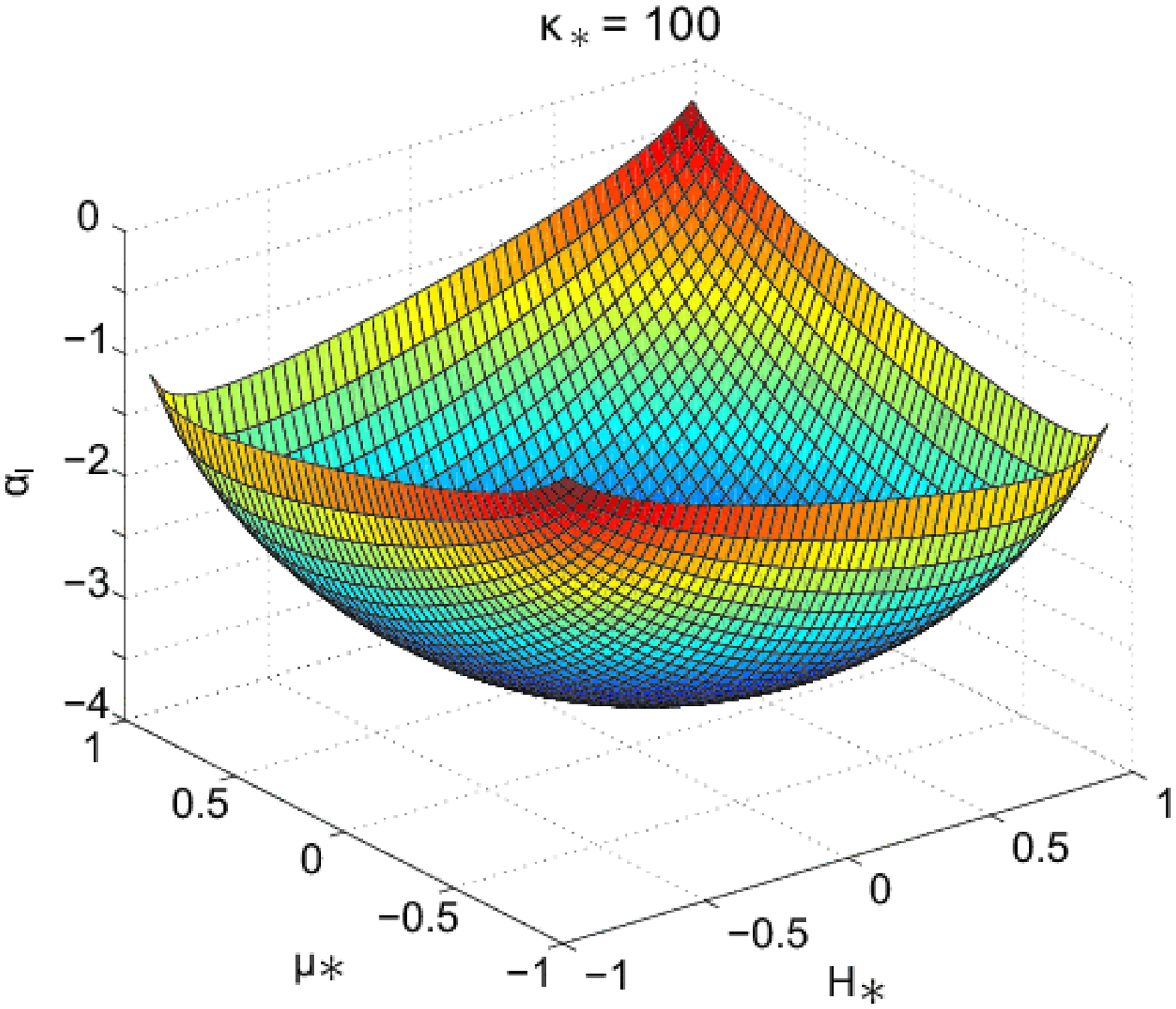}\quad
     \includegraphics[width=6cm]{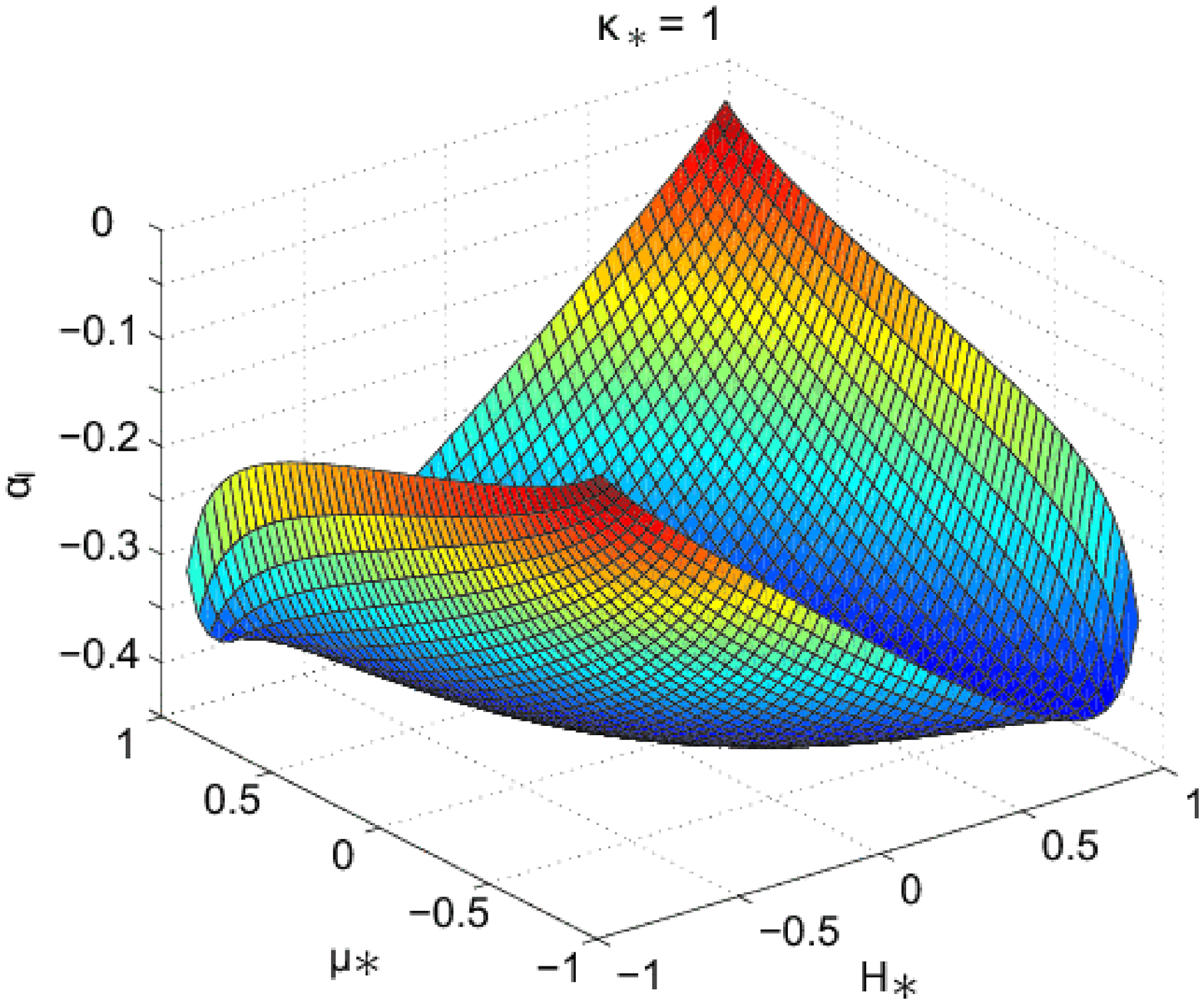}\vspace{1em}\\
     \includegraphics[width=6cm]{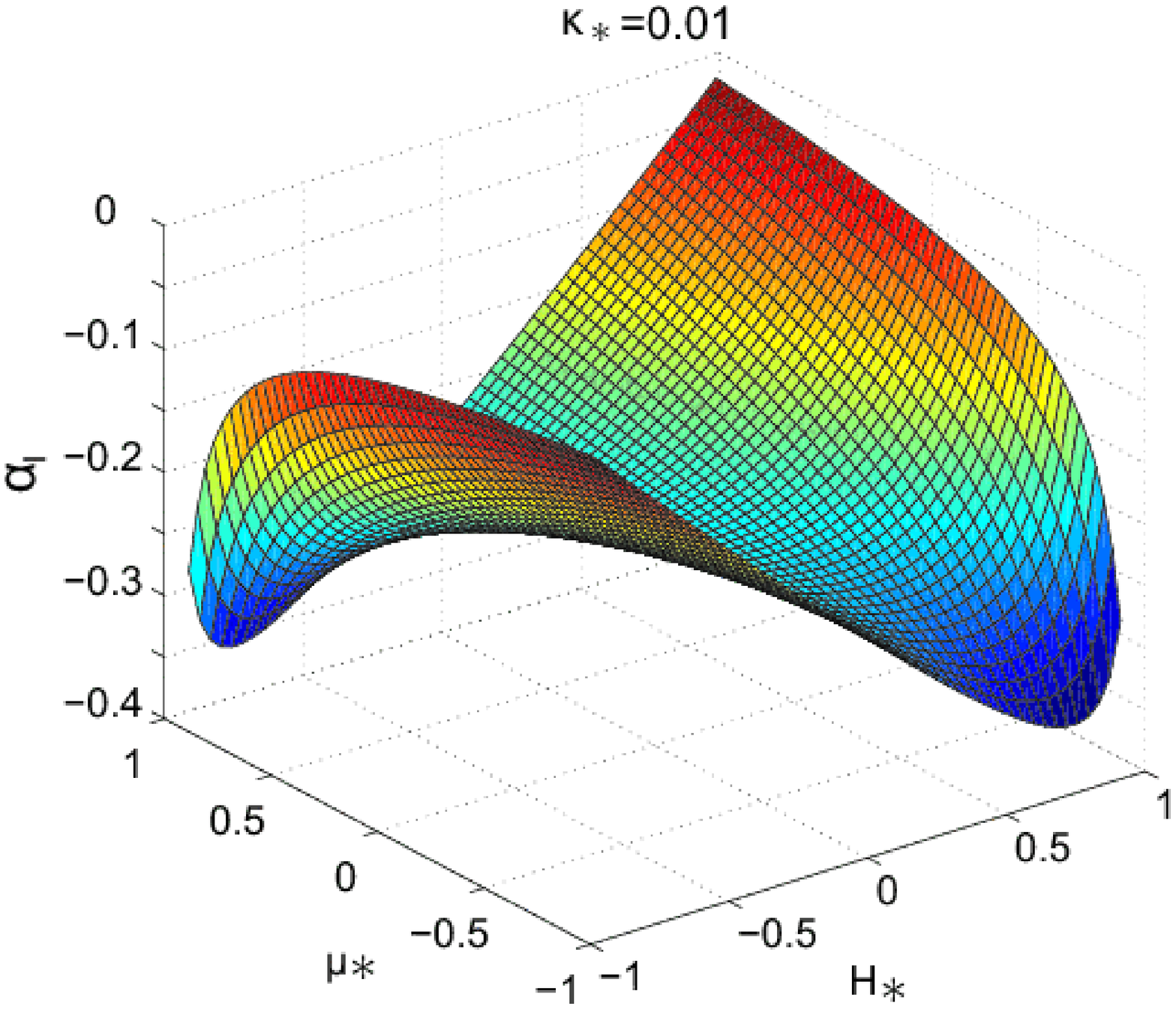}\quad
     \includegraphics[width=6cm]{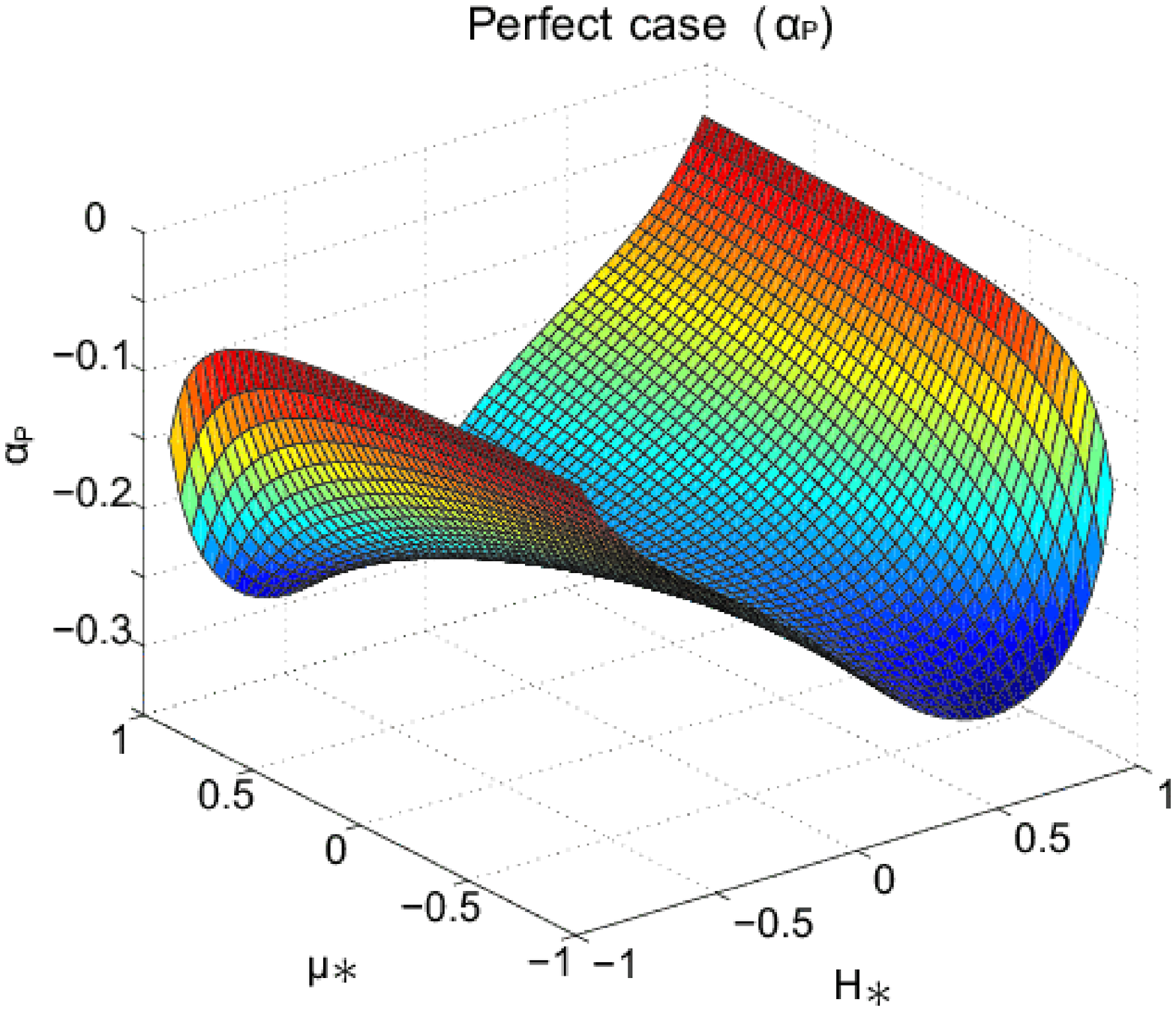}}\end{minipage}
\caption{{\footnotesize{Surface plots of $\alpha_I$ for $\kappa_*=100,$ 1 and 0.01; also of $\alpha_P$, all plotted on axes of $\mu_*$ and $H_*$.}}}
\label{alphaplots}
 \end{figure}

Figure \ref{fig:comsol} shows finite-element plots (COMSOL) of standing wave eigensolutions. For these simulations we use the following geometrical parameters for the elementary cell:
\[
 l=0.8[\mathrm{m}],\quad a=2.4[\mathrm{m}],\quad H_1=0.1[\mathrm{m}],\quad H_2=0.05[\mathrm{m}],\nonumber
\]
and the following material constants which correspond to iron (in $\Pi_\varepsilon^{(2)}$) and aluminium (in $\Pi_\varepsilon^{(1)}$).
\[
\mu_2=82\cdot10^9[\mathrm{N/m^2}],\quad \mu_1=26\cdot10^9[\mathrm{N/m^2}],\quad\rho_2=7860[\mathrm{kg/m^3}],\quad\rho_1=2700[\mathrm{kg/m^3}].
\]
Presented in this figure are three plots corresponding to Al-Fe strips with different materials bonding them together, with the vertical dotted lines indicating the location of the crack tips. The imperfect interface is modelled in the COMSOL simulations by a thin layer occupied by an adhesive material; this approach was justified in \cite{Mish2001a,Mish2006}, among others. Provided that $h_{\mathrm{resin}}/H_2$ is sufficiently small and $\mu_{\mathrm{resin}}$ is small in comparison to $\mu_1$ and $\mu_2$, this gives $\kappa=h_{\mathrm{resin}}/\mu_{\mathrm{resin}}$.

The second of the three plots in Figure \ref{fig:comsol} uses epoxy resin as the bonding material with parameters
\[
 \mu_{\mathrm{resin}}=2.5\cdot10^9[\mathrm{N/m^2}],\quad \rho_{\mathrm{resin}}=1850[\mathrm{kg/m^3}],\quad h_{\mathrm{resin}}=0.01[\mathrm{m}].
\]
For comparison, the first plot shows a simulation with a gluing layer of shear modulus 1000 greater than that of epoxy resin. The third plot uses a material with shear modulus 10 times less than epoxy resin. Equivalently, these three cases in the top, middle and bottom parts of the figure correspond to $\kappa_*=2.88\cdot10^{-3},$ $\kappa_*=2.88,$ and $\kappa_*=28.8,$ respectively. The plots show that the standing wave is more localised and intense in the locality of the crack when the bonding material is stiffer. Conversely, when the bonding material is less stiff, the standing wave extends further beyond the locality of the crack and is less intense. Closely packed contours indicate areas where stress is high; we see that the highest stress is to be found in the vicinity of the crack tip in all three cases. Moreover, as we would expect, the highest stress intensity is found in the case with the stiffest bonding material. 

\begin{figure}[t]
\begin{center}
\includegraphics[width=11cm]{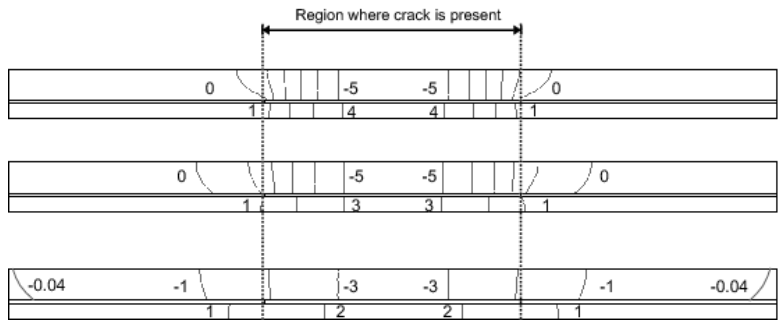}
\end{center}
\caption{\footnotesize{Finite element computation (COMSOL) contour plot of the eigensolution corresponding to the standing Bloch-Floquet waves for three different values of $\kappa$. {\bf Top: }Bonding material with shear modulus $1000\mu_\mathrm{resin}$. {\bf Middle: }Bonding material is epoxy resin. {\bf Bottom: }Bonding material with shear modulus $\mu_\mathrm{resin}/10$. Countours join points of integer values, and the dotted vertical lines indicate the location of the crack tips.}}
\label{fig:comsol}
\end{figure}

We do not present dispersion diagrams here computed by the asymptotic analysis and COMSOL as they are similar to those given in \cite{Mish2007}. As in that paper, the biggest discrepancy between results obtained from asymptotic analysis and numerical simulations appear for the case of the standing waves. \textcolor{black}{In all other situations the accuracy is very good, with a typical discrepancy between finite element and asymptotic results of around 0.3\% in the case where the strip has the same dimensions as used throughout this section, which corresponds to $\varepsilon=0.0625$.\label{fem_comparison_comments}} We remind the reader that we use static boundary layers in the analysis. The standing waves lie in the area of rather high frequencies, which may provide one possible explanation for this phenomenon. This discrepancy needs to be eliminated and this will form part of future work.

\textcolor{black}{It is readily seen in the bottom plot of Figure \ref{fig:comsol} (which corresponds to a highly imperfect interface) that the boundary layer support extends almost to the edge of the elementary cell. This extension far away from the crack tips suggests that the boundary layers decay slowly from the crack tips and so may not be assumed independent. In this case, therefore, our analysis may become invalid due to the assumption in our asymptotic procedure that the exponentially decaying boundary layer does not influence the Bloch-Floquet conditions.} This assumption is satisfied if $\gamma_+$ is far from zero, so if $\kappa$ is not too large. More accurately, we assume $\gamma_+\gg\frac{\varepsilon}{a-l}$ (see (\ref{gammaplusest}) for large $\kappa$). If the imperfect interface is too weak and this condition is violated then the junction conditions evaluated here will no longer be accurate and other analysis should be sought.

\section*{Acknowledgements}
AV would like to thank Aberystwyth University for providing APRS funding. GM is grateful for support from the European Union Seventh Framework Programme under contract number PIAP-GA-2009-251475.

\appendix

\section{Derivation of asymptotics of $\Xi_*^+(\xi)$}
We present here the derivation of asymptotics for $\Xi_*^+(\xi)$. The results of this derivation are used in expressions (\ref{xistarplus0}) and (\ref{xistarplusinf}). We introduce the auxiliary function
\begin{equation}\label{thetadef}
\Theta_*^+(\xi)=\int\limits_{-\infty-i\beta}^{\infty-i\beta}{\frac{\ln\Xi_*(t)}{t-\xi}dt},
\end{equation}
so that $\Xi_*^+(\xi)=\exp((1/2\pi i)\Theta_*^+(\xi))$ (see (\ref{xistarpm})).
We first note that $\Theta_*^+(0)=0$ since the integrand is odd and estimate (\ref{estofxi0})$_2$ demonstrates integrability of $\Xi_*$ at the zero point, allowing us to take $\beta=0$.
Thus
\[
\Theta_*^+(\xi)=\int\limits_{-\infty-i\beta}^{\infty-i\beta}{\left[\frac{\ln\Xi_*(t)}{t-\xi}-\frac{\ln\Xi_*(t)}{t}\right]dt}=\xi\int\limits_{-\infty-i\beta}^{\infty-i\beta}{\frac{\ln\Xi_*(t)}{t(t-\xi)}dt}\to0,\quad\xi\to0,
\]
since the integral is bounded.
Also, we have that
\begin{equation}
\int\limits_{-\infty-i\beta}^{\infty-i\beta}{\frac{\ln\Xi_*(t)}{t^2}dt}=\int\limits_{-\infty}^\infty{\frac{\ln\Xi_*(t)}{t^2}dt}=2\int\limits_{0}^\infty{\frac{\ln\Xi_*(t)}{t^2}dt}=2\alpha,
\end{equation}
since the integrand is even and again by considering (\ref{estofxi0})$_2$, which indicates that we have integrability at the zero point. Here we have found that
\begin{equation}
\Theta_*^+(\xi)=2\alpha\xi+O(|\xi|^2),\quad\xi\to0.
\end{equation}
From this we obtain the following estimate for $\Xi_*^+(\xi)$ as $\xi\to0$:
\begin{equation}
\Xi_*^+(\xi)=1+\frac{\alpha\xi}{\pi i}+O(|\xi|^2),\quad\xi\to0.
\end{equation}
We now seek estimates of $\Theta_*^+(\xi)$ for $\xi\to\infty$ within the domain. To avoid problems caused by integrating along the real line, we consider $\xi\to\infty$ in such a way that Im$(\xi)\to+\infty$. Integrating (\ref{thetadef}) by parts, splitting the integral in two and manipulating the resulting expression gives

\begin{equation}
\Theta_*^+(\xi)=\int\limits_0^\infty{\ln\left(\frac{1+t/\xi}{1-t/\xi}\right)\frac{\Xi'_*(t)}{\Xi_*(t)}dt}.
\end{equation}
We introduce an arbitrary $R>0$ and split this integral at $R$ to give
\begin{equation}
\Theta_*^+(\xi)=\int\limits_0^R{\ln\left(\frac{1+t/\xi}{1-t/\xi}\right)\frac{\Xi'_*(t)}{\Xi_*(t)}dt}+\int\limits_R^\infty{\ln\left(\frac{1+t/\xi}{1-t/\xi}\right)\frac{\Xi'_*(t)}{\Xi_*(t)}dt}.
\end{equation}
We then see that
\begin{equation}
\ln\left(\frac{1+t/\xi}{1-t/\xi}\right)=2\frac{t}{\xi}+O\left(\frac{t^3}{|\xi|^3}\right),\quad\xi\to\infty,\quad0<t<R,
\end{equation}
and from (\ref{bigxiasym}) we have
\begin{equation}
\frac{\Xi'_*(t)}{\Xi_*(t)}=\left[-\frac{\mu_1+\mu_2}{\kappa\mu_1\mu_2}\right]\frac{1}{t^2}+O\left(\frac{1}{t^3}\right),\quad t\to\infty.
\end{equation}
This allows us to estimate
\[
\Theta_*^+(\xi)=\int\limits_R^\infty\left[-\frac{(\mu_1+\mu_2)}{\mu_1\mu_2\kappa}\frac{1}{t^2}+O\left(\frac{1}{t^3}\right)\right]\ln\left(\frac{\xi+t}{\xi-t}\right)dt+O\left(\frac{1}{|\xi|}\right),\quad\xi\to\infty.\label{thetaform}
\]
After integrating by parts and performing a change of variables, we find that
\begin{equation}
\int\limits_R^\infty{\frac{1}{t^2}\ln\left(\frac{\xi+t}{\xi-t}\right)dt}=-\frac{1}{\xi}\left(\ln\left|\frac{1}{\xi^2}\right|+i\arg\left(-\frac{1}{\xi^2}\right)\right)+O\left(\frac{1}{|\xi|}\right),\quad\xi\to\infty,
\end{equation}
and so from (\ref{thetaform}), we deduce that
\begin{equation}
\Theta_*^+(\xi)=\frac{2(\mu_1+\mu_2)}{\mu_1\mu_2\kappa\xi}\ln(-i\xi)+O\left(\frac{1}{|\xi|}\right),\quad\mathrm{Im}(\xi)\to+\infty.
\end{equation}
Recalling the relationship between our auxiliary function $\Theta_*^+$ and $\Xi_*^+$ as we discussed after (\ref{thetadef}), we see that
\begin{equation}
\Xi_*^+(\xi)=1+\frac{1}{\pi i}\frac{(\mu_1+\mu_2)}{\mu_1\mu_2\kappa}\frac{\ln(-i\xi)}{\xi}+O\left(\frac{1}{|\xi|}\right),\quad\mathrm{Im}(\xi)\to+\infty.
\end{equation}

\section{Theorem}
\begin{theorem}\label{theoremforphi}
Let f(x) be the function
\begin{equation}
f(x)=\frac{1}{2\pi}\int\limits_{-\infty}^{\infty}\Phi^+(t)e^{-ixt}dt.
\end{equation}
If $\Phi^+(t)$ is analytic in $\mathbb{C}^+$ and
\begin{equation}
\Phi^+(t)=a_1t^{-1}+O(t^{-(1+\delta)}),\quad t\to\infty,
\end{equation}
where $\delta>0$ is small, in the closed half-plane $\overline{\mathbb{C}}^+=\mathbb{C}^+\cup\mathbb{R}$, then $f(x)=0$ for all $x<0$ and
\begin{equation}
\lim\limits_{x\to0^+}f(x)=-ia_1.
\end{equation}
\end{theorem}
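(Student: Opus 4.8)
The plan is to prove both assertions by closing the inversion contour in the upper half-plane, the only genuine difficulty being the slow $O(t^{-1})$ decay of $\Phi^+$, which blocks a direct dominated-convergence argument.

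For the causality statement, I would first note that for $x<0$ the real part of the exponent in $e^{-ixt}$ equals $x\,\mathrm{Im}(t)$, which is negative in the upper half-plane; since $\Phi^+$ is analytic in $\overline{\mathbb{C}}^+$ with $\Phi^+(t)=O(t^{-1})$ there, Jordan's lemma makes the large semicircular arc in the upper half-plane vanish. Closing the contour upward then encloses no singularities, giving $f(x)=0$ for every $x<0$.

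For the value at the origin the idea is to peel off the $a_1/t$ tail with an explicit, residue-computable model. Fixing $c>0$, I would write $\Phi^+(t)=a_1/(t+ic)+r(t)$, so that $f=a_1 g_c+h$ where $g_c(x)=(2\pi)^{-1}\int_{-\infty}^{\infty}e^{-ixt}/(t+ic)\,dt$ and $h$ is the inverse transform of $r$. The function $1/(t+ic)$ is analytic in $\overline{\mathbb{C}}^+$ with its only pole at $t=-ic$; for $x>0$ I would close downward (a clockwise contour) and pick up the residue $e^{-cx}$, obtaining $g_c(x)=-ie^{-cx}$ and hence $g_c(0^+)=-i$, while for $x<0$ the upward-closing argument gives $g_c=0$. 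This already supplies the constant $a_1\cdot(-i)=-ia_1$, and it clarifies why a naive principal-value treatment of $a_1/t$ (which would give only $-ia_1/2$) is wrong: the analyticity of $\Phi^+$ in the upper half-plane is exactly what pushes the effective pole off the real axis.

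It then remains to check that $h(0^+)=0$. Expanding $a_1/(t+ic)=a_1/t+O(t^{-2})$ shows $r(t)=O(t^{-(1+\delta')})$ with $\delta'=\min(\delta,1)>0$; since $\Phi^+$ is continuous on $\mathbb{R}$ (as in our application, where it is analytic in a strip containing the real axis), $r\in L^1(\mathbb{R})$, so $h$ is continuous on all of $\mathbb{R}$. The upward contour argument again yields $h(x)=0$ for $x<0$, and continuity then forces $h(0^+)=h(0)=0$. Assembling the pieces gives $f(x)=0$ for $x<0$ and $\lim_{x\to0^+}f(x)=-ia_1$. The main obstacle throughout is the conditional convergence induced by the $O(t^{-1})$ decay, which is precisely what dictates the split into a model term handled by residues and a faster-decaying $L^1$ remainder handled by continuity; secondary care is required in verifying the hypotheses of Jordan's lemma and in tracking the orientation sign when closing downward.
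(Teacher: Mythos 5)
Your argument is correct, but it takes a genuinely different route from the paper. The paper splits the inversion integral at $\pm a$ with the coupled choice $a=x^{-1/2}$: the outer tails of the model term $a_1/t$ reduce to the Dirichlet integral $\int_{xa}^{\infty}\sin(t)\,t^{-1}\,dt\to\pi/2$, the inner part $\int_{-a}^{a}$ is deformed onto the semicircle $\Gamma_a$ in the upper half-plane, and each piece contributes $-i\pi a_1$ while the remainders vanish as $x\to0^+$. You instead regularize the model singularity off the real axis, writing $\Phi^+(t)=a_1/(t+ic)+r(t)$, evaluate the model exactly by residues as $-ia_1e^{-cx}$ for $x>0$ (and $0$ for $x<0$), and dispose of the remainder by noting $r=O(t^{-(1+\delta')})$ is in $L^1$, so its transform is continuous and, vanishing for $x<0$ by the upward contour closure, must vanish at $0^+$. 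Your decomposition buys a cleaner separation of concerns --- the only limit taken is $x\to0^+$ of an explicit exponential, with no coupled truncation parameter --- and your remark that a principal-value treatment of $a_1/t$ would wrongly yield $-ia_1/2$ correctly identifies why upper-half-plane analyticity is the essential hypothesis. What the paper's version buys is that it works directly with the symmetric truncation $\int_{-a}^{a}$ without ever invoking $L^1$ theory or continuity of the transform, at the price of the more delicate bookkeeping of four sub-integrals. One small point to make explicit in your write-up: local integrability (in the application, continuity) of $\Phi^+$ on $\mathbb{R}$ is needed both for $r\in L^1$ and for $f$ itself to be well defined; you flag this, and the paper's proof uses it tacitly as well.
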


\begin{proof}
The fact that $f(x)=0$ for all $x<0$ is a direct consequence of the fact that $\Phi^+(t)$ is a `+' function. Assume now that $x>0$. From the assumptions on the behaviour of the function $\Phi^+(t)$, it follows that $\Phi^+(t)=a_1t^{-1}+R(t)$, where $tR(t)\to0$, as $t\to\infty,$ $t\in\overline{\mathbb{C}}^+$ (including $t\to\pm\infty,$ $t\in\mathbb{R}$).

We write

\begin{equation}\label{fdefn}
f(x)=\frac{1}{2\pi}\lim\limits_{a\to+\infty}\left\{\int\limits_a^\infty[\Phi^+(-t)e^{ixt}+\Phi^+(t)e^{-ixt}]dt+\int\limits_{-a}^a{\Phi^+(t)e^{-ixt}}dt\right\}.
\end{equation}
The first integral is
\begin{equation}
f_1(x,a)=\int\limits_a^\infty[\Phi^+(-t)e^{ixt}+\Phi^+(t)e^{-ixt}]dt=f_{11}(x,a)+f_{12}(x,a),
\end{equation}
where
\begin{equation}
f_{11}(x,a)=\int\limits_a^\infty{\left[-\frac{a_1}{t}e^{ixt}+\frac{a_1}{t}e^{-ixt}\right]dt}=-2ia_1\int\limits_{xa}^\infty{\frac{\sin(t)}{t}dt},
\end{equation}
and
\[
f_{12}(x,a)=\int\limits_a^\infty{\left[R(-t)e^{ixt}+R(t)e^{-ixt}\right]dt}=\int\limits_{xa}^\infty{\left[\frac{1}{x}R\left(-\frac{t}{x}\right)e^{it}+\frac{1}{x}R\left(\frac{t}{x}\right)e^{-it}\right]dt}.
\]
Taking $a=x^{-1/2}$, we have that $f_{11}(x,x^{-1/2})\to-i\pi a_1$ and $f_{12}(x,x^{-1/2})\to0$ as $x\to0^+$.
Let us denote the second integral in (\ref{fdefn}) by $f_{2}(x,a)$. Then using analyticity of $\Phi^+(t)$ in $\mathbb{C}^+$ and defining
\begin{equation}
\Gamma_a=\{t\in\mathbb{C}|t=ae^{i\theta},0<\theta<\pi\},
\end{equation}
we deduce
\begin{equation}
	f_2(x,a)=-\int\limits_{\Gamma_a}{\Phi^+(t)e^{-ixt}dt}.
\end{equation}
We write this in the form
\begin{equation}
	f_2(x,a)=f_{21}(x,a)+f_{22}(x,a),
\end{equation}
where
\begin{equation}
f_{21}(x,a)=-\int\limits_{\Gamma_a}\frac{a_1}{t}e^{-ixt}dt,\quad\mathrm{ and } f_{22}(x,a)=-\int\limits_{\Gamma_a}R(t)e^{-ixt}dt.
\end{equation}
Again taking $a=x^{-1/2}$, we obtain
\begin{equation}
f_{21}(x,x^{-1/2})=-\int\limits_{\Gamma_{x^{-1/2}}}{\frac{a_1}{t}e^{-ixt}dt}\sim-a_1\int\limits_{\Gamma_{x^{-1/2}}}{\frac{1}{t}dt}
=-i\pi a_1,\quad x\to0^+.
\end{equation}
Now,
\begin{equation}
f_{22}(x,a)=-\int\limits_{\Gamma_a}{R(t)e^{-ixt}dt}=-\int\limits_{\Gamma_{xa}}{\frac{1}{x}R\left(\frac{t}{x}\right)e^{-it}dt},
\end{equation}
and so $f_{22}(x,x^{-1/2})\to0$ as $x\to0^+$. By collecting together these observations and reconsidering equation (\ref{fdefn}) we conclude that
\begin{equation}
f(x)\to\frac{1}{2\pi}(-i\pi a_1 -i\pi a_1)=-ia_1, \quad x\to0^+,
\end{equation}
which completes the proof.
\end{proof}

\end{document}